\newtheorem{theorem}{Theorem}
\newtheorem{lemma}[theorem]{Lemma}
\newtheorem{definition}[theorem]{Definition}
\newtheorem{corollary}[theorem]{Corollary}
\begin{document}

\begin{frontmatter}



\title{Anosov Properties of a Symplectic Map with Time-Reversal Symmetry}


\author[inst1]{Ken-ichi Okubo}

\affiliation[inst1]{organization={Department of Applied Mathematics and Physics, 
	Graduate School of Informatics, Kyoto University},
            addressline={Yoshida-honmachi}, 
            city={Sakyo-ku},
            postcode={606-8501}, 
            state={Kyoto},
            country={Japan}}
\ead{okubo@rs.socu.ac.jp}
\author[inst1]{Ken Umeno}






\begin{abstract}
This study presents a specific symplectic map, derived from a Hamiltonian, as a model that exhibits time-reversal symmetry on a microscopic scale. Based on the analysis, any initial density function, defined almost everywhere, converges to a uniform distribution in terms of mixing (irreversible behavior) on a macroscopic level. Furthermore, we established that this mixing invariant measure is a unique equilibrium state, unique SRB measure, and physical measure. Additionally, through analytical proof, we have shown that the Kolmogorov-Sinai entropy, representing the average information gain per unit time is positive. This was achieved by validating the Pesin's formula and demonstrating that the critical exponent of the Lyapunov exponent is $1/2$.
\end{abstract}



\begin{keyword}
Anosov diffeomorphism  \sep Time-reversal symmetry \sep Irreversibility \sep Chaos \sep Mixing\sep Bernoulli \sep Equilibrium state \sep SRB measure \sep Physical measure \sep critical exponent \sep Lyapunov exponent
\end{keyword}

\end{frontmatter}


\section{Introduction}
\label{sec:introduction}
The relaxation from non-equilibrium to equilibrium states has garnered significant attention since the era of Boltzmann to date.
Boltzmann elucidated the monotonic decrease property of the H-function for classical dynamical systems with time-reversal symmetry by assuming the molecular chaos hypothesis \cite{cercignaniboltzmann,gallavotti1999statistical}.
However, his assertions were met with skepticism. Zermelo challenged Boltzmann’s ideas by invoking Poincaré's recurrence theorem, suggesting that if the state of dynamics is expressed by a single point in phase space, 
trajectories will eventually return arbitrarily close to their initial points, leading to a return of the H-function to its initial value and hence not monotonically decrease.
Boltzmann, in response, argued that the recurrence time is longer than the age of the universe and therefore would be unobservable \cite{gallavotti1999statistical}.
This counterargument by Zermelo stemmed from the notion that the values of the H-function are solely determined by the microscopic state of a single point in phase space.

However, to counter the idea that the microscopic state of a single point in $\Gamma$ space determines macroscopic observation values, an alternative viewpoint emerged, suggesting that 
macroscopic observations are influenced by the mean behavior of points in $\Gamma$ space.
Gibbs introduced the concept of ensembles, focusing on the mean behavior of a set of points on an energy surface \cite{dorfman1999introduction}.
In modern mathematical terms, macroscopic quantities are described as parameters of probability distributions in phase space, known as macroscopic states \cite{keller1998equilibrium,ruelle1999smooth}.

While Boltzmann initially proposed ergodicity to explain the relaxation to equilibrium states, 
phenomena in which systems relax to equilibrium states in relatively short timescales cannot be explained solely through ergodicity \cite{gallavotti1999statistical}.
In such cases, the role of chaos and mixing properties becomes crucial for the expected values of observables of time-dependent probability distribution functions to converge to their values at equilibrium within realistic observation times \cite{gallavotti1999statistical,dorfman1999introduction,zaslavsky1999chaotic,gallavotti2014nonequilibrium}.
Gallavotti proposed the chaos hypothesis, exemplified by mixing Anosov systems \cite{gallavotti1999statistical,gallavotti2014nonequilibrium}.

Mixing involves the spreading out of an initial set of points to occupy the entire phase space while preserving its measure in a coarse sense.
The anticipated value of the time-dependent distribution function $\rho(\Gamma,t)$ of the observed quantity $F(\Gamma)$ at a point $\Gamma$ in phase space is expressed as follows, with $\mu$ representing the invariant measure \cite{dorfman1999introduction}.
\begin{equation}
	\left\langle F(\Gamma)\right\rangle_t = \frac{\int d\mu \rho(\Gamma,t)F(\Gamma)}{\int d\mu \rho(\Gamma,t)}
\end{equation}
Let $\mathcal{M}$ be the entire phase space, $\mathcal{M}$ be partitioned into a sufficiently fine set $\{m_j\}$, and $F_j$ be a typical value of $F$ on $m_j$. Therefore, in the limit as $t \to \infty$, the following equation is satisfied\cite{dorfman1999introduction}:
\begin{equation}
	\lim_{t\to \infty}\left\langle F(\Gamma)\right\rangle_t = \frac{1}{\mu(\mathcal{M})}\sum_j F_j \mu(m_j) = \frac{\int d\mu F(\Gamma)}{\int d\mu}.
\end{equation}
This relaxation to the average value of the equilibrium state is a typical irreversible phenomenon.
Examples of systems exhibiting mixing behavior include Sinai's billiards \cite{sinai1970dynamical}, Bunimovich's stadium \cite{bunimovich1979ergodic}, Wojtkowski's cascade \cite{wojtkowski1986principles},
Arnold's cat map \cite{arnold1968ergodic}, and (multi)baker's map \cite{dorfman1999introduction,tasaki1998analytical,tasaki1995fick}.

Anosov diffeomorphisms \cite{katok1995introduction,bowen1975equilibrium} represent mathematical models of chaotic phenomena. These diffeomorphisms possess hyperbolic structures and adhere to the following three conditions \cite{tasaki1998analytical}.
Before delving into these conditions, an $N$-dimensional phase space $\mathcal{M}$ and a reference orbit $\{x_n\}_{-\infty\leq n\leq \infty}$ passing through an arbitrary point $x_0$ should be considered.
The three conditions can be outlined as follows:
(1) The deviations from the reference orbit can be categorized into stable and unstable directions.
In the stable direction, an orbit passing through a point $y_0$ deviating from the reference orbit converges exponentially to the reference orbit in the forward direction of time.
Conversely, in the unstable direction, an orbit passing through a point $z_0$ deviating from the reference orbit 
diverges exponentially from the reference orbit in the forward direction of time.
(2) The expansion and contraction directions change continuously with respect to point $x$.
(3) Trajectories deviating in the stable direction continue to converge over time, whereas trajectories deviating in the unstable direction continue to diverge.

In the context of a mixing Anosov system, the equation governing the map $T:\mathcal{M}\to \mathcal{M}$ and the observable $F(\Gamma)$ is valid owing to the ergodic property \cite{gallavotti1999statistical}.
\begin{equation}
	\lim_{t\to \infty}\frac{1}{t}\sum_{n=0}^{t-1}F(T^n\Gamma) = \int_\mathcal{M} F(y)\mu(dy)
\end{equation}
The steady state represented by the probability distribution in this equation is known as the SRB distribution (SRB measure).
The SRB measure is characterized by being absolutely continuous along the unstable direction.
An SRB measure with no zero Lyapunov exponents is a physical measure \cite{pesin1976families,katok2006invariant,pugh1989ergodic,young2002srb}.
An ergodic invariant measure $\nu$ is deemed a physical measure \cite{young2002srb} if there exists a subset $V \subset \mathcal{M}$ with positive Lebesgue measure, such that 
for every continuous observable $\phi: \mathcal{M}\to \mathbb{R}$, $\lim_{n\to \infty}\frac{1}{n}\sum_{i=0}^{n-1}\phi(x_i)=\int_V \phi d\nu$.

The relaxation of the density function from a non-equilibrium to an equilibrium state through the attainment of an SRB distribution, facilitated by mixing, is elucidated from the perspective of chaotic dynamical systems. 

This study proposes a symplectic map derived from a Hamiltonian with time-reversal symmetry in which almost any initial density function is proven to converge to the uniform distribution.
Our findings reveal that this novel symplectic map exhibits time-reversal symmetry at the microscopic level while displaying irreversible behavior in terms of mixing properties at the macroscopic level. Therefore, we prove that under certain conditions, the novel symplectic map is an Anosov diffeomorphism, with the Lebesgue measure $\mu$ serving as the unique equilibrium state, unique SRB measure, and physical measure. 
Furthermore, we analytically demonstrate that the KS entropy is positive, and the critical exponent of the Lyapunov exponent is 1/2.

\section{Mechanics}
\label{sec:mechanics}
Let us consider the following Hamiltonian system such as
\begin{eqnarray}
H(p_1,p_{2}, q_1, q_{2})
&\equiv& \frac{p_1^2}{2}  + \frac{p_2^2}{2} +V(q_1, q_{2}), \label{Hamiltonian}\\
V(q_1, q_{2}) &\equiv& -\frac{\varepsilon}{\pi}\log\left|\cos\left\lbrace  \pi(q_1-q_2)\right\rbrace \right|,
\end{eqnarray}
where $\varepsilon \in \mathbb{R}$ represents a perturbation parameter and $p_1, p_2, q_1, q_2 \in \mathbb{R}$. The potential is periodic in nature.
The shapes of the periodic part of potential $V(q_1, q_2)$ at $\varepsilon = 2.0, -2.0$ are shown in Figures \ref{Fig: potential p} and \ref{Fig: potential m}, respectively.

\begin{figure}[h!]
		\begin{tabular}{cc}
			\begin{minipage}[t]{.45\columnwidth}
				\centering
				\includegraphics[width=\columnwidth]{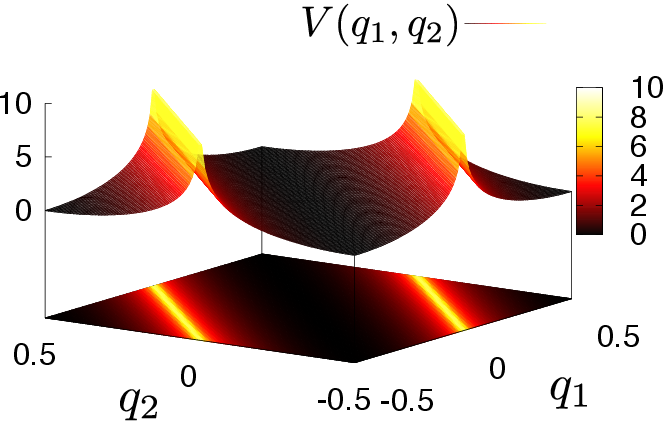}
				
				\vspace*{1cm}
				\caption{Shape of the periodic potential $V(q_1, q_2)$ for $\varepsilon=2.0$ (repulsive force occurs). }
				\label{Fig: potential p}    
			\end{minipage} &
			\hspace*{1cm}
			\begin{minipage}[t]{.45\columnwidth}
				\centering
				\includegraphics[width=\columnwidth]{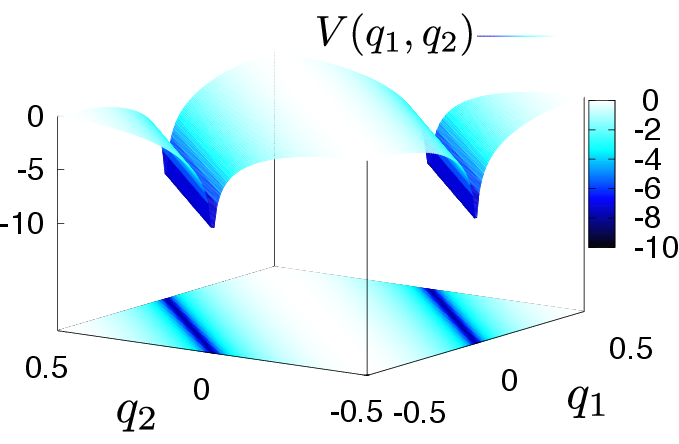}
				
				\vspace*{1cm}
				\caption{Shape of the {periodic} potential $V(q_1, q_2)$ for $\varepsilon=-2.0$ (attractive force occurs).}
				\label{Fig: potential m}    
			\end{minipage}
		\end{tabular}
\end{figure}
Next, we consider the following map $T_{\varepsilon, \Delta \tau}$ 
obtained {using} the leapfrog method (second-order {\it time-reversal symmetric} symplectic integrator)
for the canonical equations of motion associated with (\ref{Hamiltonian}).

\begin{definition}[$T_{\varepsilon, \Delta \tau}$]
The symplectic map 
$T_{\varepsilon, \Delta \tau} : \mathbb{R}^4\backslash \Gamma \to \mathbb{R}^4\backslash \Gamma$
is defined as follows:
{\footnotesize
\begin{equation}
\left(
\begin{array}{l}
p_1(n+1)\\
p_2(n+1)\\
q_1(n+1)\\
q_2(n+1)
\end{array}
\right)=
\left(
\begin{array}{l}
p_1(n) -\varepsilon \Delta \tau \tan\left[\pi\left\lbrace q_1(n)+ \frac{\Delta \tau}{2}p_1(n)-q_2(n)- \frac{\Delta \tau}{2}p_2(n)\right\rbrace \right] \\
p_2(n) +\varepsilon \Delta \tau \tan\left[\pi\left\lbrace q_1(n)+ \frac{\Delta \tau}{2}p_1(n)-q_2(n)- \frac{\Delta \tau}{2}p_2(n)\right\rbrace \right] \\
q_1(n) + p_1(n)\Delta \tau - \frac{\varepsilon(\Delta \tau)^2}{2} 
\tan\left[\pi\left\lbrace q_1(n)+ \frac{\Delta \tau}{2}p_1(n)-q_2(n)- \frac{\Delta \tau}{2}p_2(n)\right\rbrace \right]\\
q_2(n) + p_2(n)\Delta \tau + \frac{\varepsilon(\Delta \tau)^2}{2} 
\tan\left[\pi\left\lbrace q_1(n)+ \frac{\Delta \tau}{2}p_1(n)-q_2(n)- \frac{\Delta \tau}{2}p_2(n)\right\rbrace \right]
\end{array}
\right),	
\label{Symplectic map using Tan}
\end{equation}
}
where $\Delta \tau$ represents a step size and $\Gamma$ denotes a set $\{(p_1, p_2, q_1, q_2)\}$ in which
an integer $n$ exist, such that $T_{\varepsilon, \Delta\tau}^n(p_1, p_2, q_1, q_2) = (p_1', p_2', q_1', q_2')$, where
$q_1' + (\Delta\tau/2)p_1' -q_2' - (\Delta\tau/2)p_2' = (2m+1)/2, m \in \mathbb{Z}$.
\end{definition}

The Lebesgue measure of $\Gamma$ appears to be zero. 

To demonstrate the convergence of the initial density functions to the uniform distribution (SRB distribution) and positivity of the average amount of information per unit time in the dynamical system with time-reversal symmetry, we established that the dynamical system is an Anosov diffeomorphism \cite{katok1995introduction}.
To achieve this objective, alternative forms of the symplectic map, denoted as $\widehat{T}_{\varepsilon, \Delta\tau}$ and 
$\widetilde{T}_{\varepsilon, \Delta\tau}$ were deveoped from $T_{\varepsilon, \Delta\tau}$
to {streamline the proof process}. 
For the Hamiltonian $H$ and maps $T_{\varepsilon, \Delta\tau}$,
$\widehat{T}_{\varepsilon, \Delta\tau}$ and $\widetilde{T}_{\varepsilon, \Delta\tau}$,
the following diagram showcases their relationship:
\begin{displaymath}
\xymatrix{
	H\ar[r] &T_{\varepsilon, \Delta\tau} \ar[r] & \widehat{T}_{\varepsilon, \Delta \tau} \ar[r] &\widetilde{T}_{\varepsilon, \Delta \tau}
}
\end{displaymath}

Here, we introduce the map $\widehat{T}_{\varepsilon, \Delta\tau}$. Two relations exist within the map $T_{\varepsilon, \Delta\tau}$, such that
\begin{equation}
\begin{array}{lll}
p_1(n) + p_2(n) &=&  p_0 \equiv p_1(0)+ p_2(0),\\
q_1(n) + q_2(n)  &=& q_0 + np_0\Delta\tau, 
\end{array}
\end{equation}
where $q_0 \equiv q_1(0) +q_2(0)$.
Therefore, by {eliminating} $p_2$ and $q_2$ using {the} aforementioned equation, the  map can be defined as follows:
$\widehat{T}_{\varepsilon, \Delta\tau}$ as follows.

\begin{definition}[$\widehat{T}_{\varepsilon, \Delta \tau}$]
	{A} symplectic map $\widehat{T}_{\varepsilon, \Delta \tau}:M' \equiv \mathbb{R}^2 \backslash \Gamma'\to 
	M'$ {is defined as}
	{\footnotesize
		\begin{equation}
		\hspace*{-2.5cm}
		\begin{array}{ccl}
		\left(
		\begin{array}{r}
		p_1(n+1)\\
		q_1(n+1)
		\end{array}
		\right)
		&=&
		\left(
		\begin{array}{l}
		p_1(n) -\varepsilon (\Delta \tau)\tan\left[\pi\left\lbrace 2q_1(n)+ p_1(n)\Delta \tau -q_0 -p_0\Delta \tau\left(n + \frac{1}{2}\right)\right\rbrace \right]\\
		q_1(n) +p_1(n) \Delta\tau - 
		\frac{\varepsilon(\Delta \tau)^2}{2}\tan\left[\pi\left\lbrace 2q_1(n)+ p_1(n)\Delta \tau -q_0 -p_0\Delta \tau\left(n + \frac{1}{2}\right)\right\rbrace \right]
		\end{array}
		\right)
		\end{array}\label{T-hat honbun}
		\end{equation}
	}		
	\noindent where $\Gamma'$ represents a set $\{(p_1(n), q_1(n))\} \subset \mathbb{R}^2$
	such that an integer $k$ exists
	{that} satisfies the condition 
	\begin{equation}
	\widehat{T}_{\varepsilon, \Delta\tau}^k(p_1(n), q_1(n)) = (p_1(n+k), q_1(n+k)),
	\end{equation}
    where $2q_1(n+k) + p_1(n+k)\Delta\tau -q_0 -p_0\Delta\tau\left(n+k+\frac{1}{2}\right) = \frac{2m+1}{2}, ~~m\in \mathbb{Z}$.
\end{definition}
The map $T_{\varepsilon, \Delta \tau}$ can be reconstructed using $\widehat{T}_{\varepsilon, \Delta \tau}$
and the following relations: 
\begin{eqnarray}
\left\lbrace \begin{array}{lll}
p_1(n) + p_2(n) &=&  p_0 \equiv p_1(0)+ p_2(0),\\
q_1(n) + q_2(n) &=& q_0 + np_0\Delta\tau. 
\end{array}
\right.
\end{eqnarray}

Therefore, a new region is defined as $I_{\delta_{N,\Delta\tau,\varepsilon}}\equiv \left(-\frac{1}{2}+\frac{\delta_{N,\Delta\tau,\varepsilon}}{\pi},\frac{1}{2}-\frac{\delta_{N,\Delta\tau,\varepsilon}}{\pi}\right], 0<\delta_{N,\Delta\tau,\varepsilon}\ll1$, where $\delta_{N,\Delta\tau,\varepsilon}$ satisfies the following equations.
\begin{equation}
\begin{array}{rll}
F(\delta_{N,\Delta\tau,\varepsilon})\equiv\frac{\left|2\left(-1+\frac{2\delta_{N,\Delta\tau,\varepsilon}}{\pi}\right)+ 2\varepsilon(\Delta\tau)^2\left[\tan\left(\frac{\pi}{2}-\delta_{N,\Delta\tau,\varepsilon}\right)-\tan\left(-\frac{\pi}{2}+\delta_{N,\Delta\tau,\varepsilon}\right)\right]\right|}{\left(1-\frac{2\delta_{N,\Delta\tau,\varepsilon}}{\pi}\right)}
&=& N \in \mathbb{N}\\
\left|2\left(-1+\frac{2\delta_{N,\Delta\tau,\varepsilon}}{\pi}\right)+ 2\varepsilon(\Delta\tau)^2\left[\tan\left(\frac{\pi}{2}-\delta_{N,\Delta\tau,\varepsilon}\right)-\tan\left(-\frac{\pi}{2}+\delta_{N,\Delta\tau,\varepsilon}\right)\right]\right| &=& N \left(1-\frac{2\delta_{N,\Delta\tau,\varepsilon}}{\pi}\right).
\end{array}
\label{Eq: delta definition}
\end{equation}
The region $I_{\delta_{N,\Delta\tau,\varepsilon}}$ is a subset of region $I\equiv (-1/2,1/2]$.
As the map $F(\delta_{N,\Delta\tau,\varepsilon})$ diverges to positive infinity in the limit of $\delta_{N,\Delta\tau,\varepsilon}\to+0$, a value of $N$ exist that satisfies equation (\ref{Eq: delta definition}). If a considerably large value is assumed, the corresponding value of $\delta_{N,\Delta\tau,\varepsilon}$ that satisfies equation (\ref{Eq: delta definition}) decreases and region $I_{\delta_{N,\Delta\tau,\varepsilon}}$ can be approximated as $I$.
On $I_{\delta_{N,\Delta\tau,\varepsilon}}$, we identify
$-\frac{1}{2}+\frac{\delta_{N,\Delta\tau,\varepsilon}}{\pi}$ with $\frac{1}{2}-\frac{\delta_{N,\Delta\tau,\varepsilon}}{\pi}$ such that $I_{\delta_{N,\Delta\tau,\varepsilon}}$ is isomorphic to $\mathbb{S}^1$. The specific definition of the domain $I_{\delta_{N,\Delta\tau,\varepsilon}}$ permits the subsequent utilization of equation (\ref{Eq: delta definition}) to demonstrate that the map $\widetilde{T}_{\varepsilon, \Delta \tau}$ is a diffeomorphism on the domain $I_{\delta_{N,\Delta\tau,\varepsilon}}$.

The transformation $P$ that transfers the phase part of the tan function of the mapping $\widehat{T}_{\varepsilon, \Delta \tau}$ onto $I_{\delta_{N,\Delta\tau,\varepsilon}}$ is expressed as follows:

{\small
\begin{equation}
\left(
\begin{array}{c}
s_n'\\
t_n'
\end{array}
\right)=
P\left(
\begin{array}{c}
p_1(n)\\
q_1(n)
\end{array}
\right)
=
\left(
\begin{array}{c}
 2q_1(n)+ p_1(n)\Delta \tau -q_0 -p_0\Delta \tau \left(n+1/2\right)~\mbox{mod}~{I_{\delta_{N,\Delta\tau,\varepsilon}}}\\
Q(p_1(n), q_1(n)) -q_0 -p_0\Delta \tau \left(n + 3/2\right) \bmod {I_{\delta_{N,\Delta\tau,\varepsilon}}}
\end{array}
\right), \label{homeomorphism}
\end{equation}
}
\noindent where $Q(p_1(n), q_1(n)) \equiv 2q_1(n+1)+ p_1(n+1)\Delta \tau$.
`` $\mbox{mod}~I_{\delta_{N,\Delta\tau,\varepsilon}}$'' is defined as follows.
\begin{definition}[mod $I_{\delta_{N,\Delta\tau,\varepsilon}}$]
		We define the operation $\bmod~I_{\delta_{N,\Delta\tau,\varepsilon}} : \mathbb{R} \to I_{\delta_{N,\Delta\tau,\varepsilon}}$ as
		\begin{equation}
		\begin{array}{rll}
		x ~\bmod~I_{\delta_{N,\Delta\tau,\varepsilon}} &\equiv& 
		x - n\left(1-\frac{2}{\pi}\delta_{N,\Delta\tau,\varepsilon}\right), \\
		~\mbox{where}~
		-\frac{1}{2}+\frac{\delta_{N,\Delta\tau,\varepsilon}}{\pi}+ n\left(1-\frac{2}{\pi}\delta_{N,\Delta\tau,\varepsilon}\right) &<& x \leq \frac{1}{2}-\frac{\delta_{N,\Delta\tau,\varepsilon}}{\pi} + n\left(1-\frac{2}{\pi}\delta_{N,\Delta\tau,\varepsilon}\right),
		n \in \mathbb{Z}\nonumber.
		\end{array}
		\end{equation}
\end{definition}

Regarding the time evolution of $(s_n',t_n')$ previously defined, we define a new symplectic map $\widetilde{T}_{\varepsilon, \Delta \tau}:  M \equiv I_{\delta_{N,\Delta\tau,\varepsilon}}^2 \to  M$
obtained from $\widehat{T}_{\varepsilon, \Delta \tau}$ as follows:

\begin{definition}[$\widetilde{T}_{\varepsilon, \Delta \tau}$]
	{We define} a symplectic map $\widetilde{T}_{\varepsilon, \Delta\tau}:M \to M$ {as} follows:
	\begin{equation}
	\hspace*{-2cm}
	\left(
	\begin{array}{c}
	s_{n+1},\\
	t_{n+1}
	\end{array}
	\right)=
	\left(
	\begin{array}{c}
	g(t_n)\\
	h(s_n, t_n)
	\end{array}
	\right)
	=
	\left(
	\begin{array}{c}
	t_n\\
	2t_n -s_n -2\varepsilon(\Delta \tau)^2\tan(\pi t_n) \bmod {I_{\delta_{N,\Delta\tau,\varepsilon}}}
	\end{array}
	\right) \label{2D-map}
	\end{equation}
	where $x_n \equiv (s_n, t_n) \in I_{\delta_{N,\Delta\tau,\varepsilon}}^2$.
\end{definition}

The symplectic map $\widetilde{T}_{\varepsilon, \Delta \tau}$ is topologically semi-conjugate to
$\widehat{T}_{\varepsilon, \Delta \tau}$
through a transformation denoted as $P$. The $(p_1,q_1)$ utilized at the transformation $P$ is not defined on $\Gamma'$; therefore, the point $(s',t')$ corresponding to a point on $\Gamma'$ is not defined. However, the mapping $\widetilde{T}_{\varepsilon, \Delta \tau}$ is not directly defined from the time evolution of $(s',t')$ but using equation (\ref{2D-map}).

For the symplectic maps $T_{\varepsilon, \Delta\tau}$, $\widehat{T}_{\varepsilon, \Delta\tau}$, and
$\widetilde{T}_{\varepsilon, \Delta\tau}$  the following diagram is satisfoed:
\begin{eqnarray*}
	\begin{CD}
		\left(\mathbb{R}^2\times I_{\delta_{N,\Delta\tau,\varepsilon}}^2\right)\backslash \Gamma @>T_{\varepsilon, \Delta\tau} >> 
		\left(\mathbb{R}^2\times I_{\delta_{N,\Delta\tau,\varepsilon}}^2\right)\backslash \Gamma
	\end{CD}
\end{eqnarray*}
\begin{eqnarray*}
	\begin{CD}
		\left(\mathbb{R}\times I_{\delta_{N,\Delta\tau,\varepsilon}}\right)\backslash \Gamma' @>\widehat{T}_{\varepsilon, \Delta\tau} >> 
		\left(\mathbb{R}\times I_{\delta_{N,\Delta\tau,\varepsilon}}\right) \backslash \Gamma'\\
		@VPVV   @VVPV \\
		I_{\delta_{N,\Delta\tau,\varepsilon}}^2 @>\widetilde{T}_{\varepsilon, \Delta\tau}>> I_{\delta_{N,\Delta\tau,\varepsilon}}^2
	\end{CD}
\end{eqnarray*}

\noindent {The} Jacobian of the map $\widetilde{T}_{\varepsilon, \Delta \tau}$ is expressed as follows:
\begin{eqnarray}
J(x_n) = \left(
\begin{array}{cc}
0 &  1\\
-1 & 2-\frac{2\pi \varepsilon(\Delta \tau)^2}{\cos^2(\pi t_n)}
\end{array}
\right), \label{Jacobi}
\end{eqnarray}
and it is not a constant matrix. 
In the case of constant matrices, Arnold's cat map \cite{arnold1968ergodic,ermann2012arnold,barash2006periodic} and (multi) baker's map \cite{dorfman1999introduction,tasaki1998analytical,tasaki1995fick} have been extensively investigated.
The linear instability condition in which at least one eigenvalue of the Jacobian has an absolute value greater than unity, for the map $\widetilde{T}_{\varepsilon, \Delta \tau}$ is
obtained as follows:
\begin{eqnarray}
\varepsilon<0~~\mbox{or}~~ \frac{2}{\pi(\Delta \tau)^2} < \varepsilon, \label{Condition}
\end{eqnarray}
This condition is associated with the map being an Anosov diffeomorphism (Anosov system).

{Because} the map $\widetilde{T}_{\varepsilon, \Delta \tau}$ is symplectic, 
$\widetilde{T}_{\varepsilon, \Delta \tau}$ preserves the Lebesgue measure $\mu$ on $I_{\delta_{N,\Delta\tau,\varepsilon}}^2$.

\section{Time-reversal symmetry}
\label{sec: time-reversal symmetry}

In this section, we demonstrated that the symplectic map $\widetilde{T}_{\varepsilon,\Delta\tau}$ utilized in this study exhibits time-reversal symmetry.
\begin{definition}\label{Def: time-reversal symmetry}
    Let $f: M\to M$ be a map. The map $f$ exhibits time-reversal symmetry when a map $R$ \cite{roberts1992chaos} exists such that
    \begin{equation}
        R\circ f = f^{-1}\circ R, R\circ R = Id.
    \end{equation}
\end{definition}
When map $R$ is present, the form of $f^{-1}\circ R$ is equivalent to that of $R\circ f$. Therefore, any orbit obtained by utilizing $\{f^{-n}\}_{n\in \mathbb{Z}}$ can also be  obtained by utilizing $\{R \circ f^n \circ R^{-1}\}_{n\in \mathbb{Z}}$. We validate the time-reversal symmetry property of the map $\widetilde{T}_{\varepsilon,\Delta\tau}$  for $R_1(s,t) = (-t,-s)$.

For any point $(s,t)$, 
\begin{equation}
R_1 \circ \widetilde{T}_{\varepsilon,\Delta\tau} \left(
    \begin{array}{c}
         s  \\
         t 
    \end{array}
    \right)= \left(
    \begin{array}{c}
         -2t +s +2\varepsilon(\Delta\tau)^2\tan(\pi t) \bmod I_{\delta_N,\Delta \tau}  \\
         -t 
    \end{array}
    \right). \label{Eq: T tilde time reversal symmetry1}
\end{equation}
The following equation can be obtained:
\begin{equation}
\widetilde{T}_{\varepsilon,\Delta\tau}^{-1}\circ R_1 \left(
\begin{array}{c}
     s  \\
     t 
\end{array}
\right) = \left(
    \begin{array}{c}
         -2t+s+2\varepsilon(\Delta\tau)^2\tan(\pi t) \bmod I_{\delta_N,\Delta \tau} \\
         -t
    \end{array}
\right).\label{Eq: T tilde time reversal symmetry2}
\end{equation}
Therefore, from \ref{Def: time-reversal symmetry}, Eq. (\ref{Eq: T tilde time reversal symmetry1}), and (\ref{Eq: T tilde time reversal symmetry2}), the 
map $\widetilde{T}_{\varepsilon,\Delta\tau}$ exhibits time-reversal symmetry.

\section{Compactness and diffeomorphism}
\label{sec: compactness and diffeomorphism}
Here, we establish that the domain $M$ is compact and that $\widetilde{T}_{\varepsilon, \Delta\tau}$ is a diffeomorphism on $M$.

The domain $I_{\delta_{N,\Delta\tau,\varepsilon}}$ is isomorphic to the circle $\mathbb
{S}^1$ owing to the identification of $-\frac{1}{2}+ \frac{\delta_{N,\Delta\tau,\varepsilon}}{\pi}$ with $\frac{1}{2}-\frac{\delta_{N,\Delta\tau,\varepsilon}}{\pi}$.
Therefore, the domain $M$ is topologically conjugate with 
torus $\mathbb{T}^2$, with $M$ being compact.

Next, we demonstrate that the map $\widetilde{T}_{\varepsilon, \Delta \tau}$ is a diffeomorphism on $M$.
To establish this, we must first establish that the map $\widetilde{T}_{\varepsilon, \Delta\tau}$ is 
surjective, injective, and smooth, and that its inverse is continuous.

\noindent	(Proof of surjectivity):
For any point $(s_{n+1}, t_{n+1}) \in M$, a point, $(s_n, t_n) \in M$ exists such that
\begin{equation}
\left\lbrace 
\begin{array}{ccl}
s_{n} &=& 2s_{n+1} - 2\varepsilon(\Delta \tau)^2\tan(\pi s_{n+1}) -t_{n+1} ~\mbox{mod}~{I_{\delta_{N,\Delta\tau,\varepsilon}}},\\
t_{n} &=& s_{n+1}.
\end{array}
\right.
\end{equation}
Therefore, the map $\widetilde{T}_{\varepsilon, \Delta \tau}$ is surjective.

\noindent	(Proof of injectivity):
The injectivity of the map $\widetilde{T}_{\varepsilon, \Delta\tau}$ can be established through a proof by contradiction.
Assume $(s_n, t_n)$ and $(s_n', t_n')$ are two points {that} cannot be identified on $M$.
Assuming the image $(s_{n+1}, t_{n+1})$ is identified with $(s_{n+1}', t_{n+1}')$, then
\begin{eqnarray}
s_{n+1} &=& s_{n+1}',\nonumber\\
\mbox{Thus},~~ t_n &=& t_n'. \label{injective}
\end{eqnarray}
Because $2t-2\varepsilon(\Delta\tau)^2\tan(\pi t)$ is finite for $t \in I_{\delta_{N,\Delta\tau,\varepsilon}}$ and (\ref{injective}), the relationship $s_n = s_n'$ can be obtained. These results contradict the condition $(s_n, t_n) \neq (s_n', t_n')$.
Therefore, the map $\widetilde{T}_{\varepsilon, \Delta \tau}$ is injective.

\noindent	(Proof of smoothness):
Evidently,  $g, h, \frac{\partial g}{\partial t}$, $\frac{\partial h}{\partial s}$, and $\frac{\partial h}{\partial t}$ defined in (\ref{2D-map}) 
are continuous on $M$ except for $(s, t)$, where $s = \frac{1}{2}-\frac{\delta_{N,\Delta\tau,\varepsilon}}{\pi}$ or $t =\frac{1}{2}-\frac{\delta_{N,\Delta\tau,\varepsilon}}{\pi}$.
Regarding the smoothness at $\frac{1}{2}-\frac{\delta_{N,\Delta\tau,\varepsilon}}{\pi}$, $g(t)=t$ and $\frac{\partial g}{\partial t}$ are continuous at $t = \frac{1}{2}-\frac{\delta_{N,\Delta\tau,\varepsilon}}{\pi}$
and similarly, $h(s, t)=2t - 2\varepsilon(\Delta \tau)^2\tan(\pi t)-s$ and $\frac{\partial h}{\partial s}$ are 
continuous at $s=\frac{1}{2}-\frac{\delta_{N,\Delta\tau,\varepsilon}}{\pi}$ 
Next, the continuity of $h(s, t)$ and $\frac{\partial h}{\partial t}$ at $t = \frac{1}{2} - \frac{\delta_{N,\Delta\tau,\varepsilon}}{\pi}$ is examined.
From the definition of $\delta_{N,\Delta\tau,\varepsilon}$, the following equation can be obtained:
\begin{equation}
	\begin{array}{ccl}
		&&\left|h\left(s, -\frac{1}{2}+\frac{\delta_{N,\Delta\tau,\varepsilon}}{\pi}\right)-h\left(s, \frac{1}{2}-\frac{\delta_{N,\Delta\tau,\varepsilon}}{\pi}\right)\bmod ~{I_{\delta_{N,\Delta\tau,\varepsilon}}}\right|\\
		&=& \left|\left[2\left(-1+\frac{2\delta_{N,\Delta\tau,\varepsilon}}{\pi}\right)+ 2\varepsilon(\Delta\tau)^2\left\lbrace\tan\left(\frac{\pi}{2}-\delta_{N,\Delta\tau,\varepsilon}\right)-\tan\left(-\frac{\pi}{2}+\delta_{N,\Delta\tau,\varepsilon}\right)\right\rbrace \right] \bmod I_{\delta_{N,\Delta\tau,\varepsilon}}\right|\\
        &=& \left|N \left(1-\frac{2\delta_{N,\Delta\tau,\varepsilon}}{\pi}\right) \bmod I_{\delta_{N,\Delta\tau,\varepsilon}}\right|~(\because (\ref{Eq: delta definition}))\\
        &=& 0.
	\end{array} \nonumber
\end{equation}
Therefore, $h(s, t)$ is continuous at $t=\frac{1}{2} - \frac{\delta_{N,\Delta\tau,\varepsilon}}{\pi}$ on torus $M$.
{Because} $h(s, t)$ is an odd function with $t$, the following can be obtained:
$\frac{\partial h}{\partial t}\left(s, -\frac{1}{2}+\frac{\delta_{N,\Delta\tau,\varepsilon}}{\pi}\right)=\frac{\partial h}{\partial t}\left(s, \frac{1}{2}-\frac{\delta_{N,\Delta\tau,\varepsilon}}{\pi}\right)$.
Hence, $\frac{\partial h}{\partial t}$ is continuous at $t=\frac{1}{2}-\frac{\delta_{N,\Delta\tau,\varepsilon}}{\pi}$.
The aforementioned properties also apply to the inverse map $\widetilde{T}_{\varepsilon, \Delta \tau}^{-1}$. Therefore, the map $\widetilde{T}_{\varepsilon, \Delta \tau}$ is a $C^1$ diffeomorphism on $M$.
Next, we indicate that the map $\widetilde{T}_{\varepsilon, \Delta \tau}$ is a $C^{1+1}$ diffeomorphism on $M$. Obviously, $g$ is a $C^2$ diffeomorphism in terms of $s,t$ and $h$ is a $C^2$ diffeomorphism in terms of $s$. The function $\frac{\partial^2 h}{\partial t^2}$ is continuous except for $t=\frac{1}{2}-\frac{\delta_{N,\Delta\tau,\varepsilon}}{\pi}$. The function $\frac{\partial h}{\partial t}$ is bounded on $M$. Therefore, positive constants $K_1$ and $K_2$ exist such that
\begin{equation}
    \begin{array}{lll}
    \displaystyle\lim_{t \to \frac{1}{2}-\frac{\delta_{N,\Delta\tau,\varepsilon}}{\pi}-0} \frac{\left|\frac{\partial h}{\partial t}\left(s,\frac{1}{2}-\frac{\delta_{N,\Delta\tau,\varepsilon}}{\pi}\right)-\frac{\partial h}{\partial t}(s,t)\right|}{\left|\frac{1}{2}-\frac{\delta_{N,\Delta\tau,\varepsilon}}{\pi}-t\right|} &\leq& K_1,\\
    \displaystyle\lim_{t \to -\frac{1}{2}+\frac{\delta_{N,\Delta\tau,\varepsilon}}{\pi}+0} \frac{\left|\frac{\partial h}{\partial t}\left(s,-\frac{1}{2}+\frac{\delta_{N,\Delta\tau,\varepsilon}}{\pi}\right)-\frac{\partial h}{\partial t}(s,t)\right|}{\left|-\frac{1}{2}+\frac{\delta_{N,\Delta\tau,\varepsilon}}{\pi}-t\right|} &\leq& K_2.
    \end{array} \nonumber
\end{equation}
Therefore, a positive constant $C$ exists, such that
\begin{equation}
    \frac{\left|\frac{\partial h}{\partial t}(s,x)-\frac{\partial h}{\partial t}(s,y)\right|}{|x-y|}\leq C. \nonumber
\end{equation}
The aforementioned properties also apply to the inverse map $\widetilde{T}_{\varepsilon, \Delta \tau}^{-1}$.
Therefore, the map $\widetilde{T}_{\varepsilon, \Delta \tau}$ is a $C^{1+1}$ diffeomorphism, implying that the map $\widetilde{T}_{\varepsilon, \Delta \tau}$ is a $C^{1+\alpha}$ diffeomorphism.

\section{Proof of Anosov diffeomorphism}
\label{sec: Proof of Anosov diffeomorphism}

In this section, we aim to establish that the map $\widetilde{T}_{\varepsilon, \Delta \tau}$ is an Anosov diffeomorphism under the condition of linear instability. This is crucial in demonstrating the irreversible behavior of the dynamical system with time-reversal symmetry  from a macroscopic perspective. While Gallavotti postulated the \textit{chaotic hypothesis} \cite{gallavotti2014nonequilibrium} for many-particle systems to possess Anosov properties, we provide a direct proof of the Anosov properties for a specific Hamiltonian system with limited degrees of freedom through the following theorems.

\setcounter{theorem}{0}
\renewcommand{\thetheorem}{\Alph{theorem}}
\begin{theorem}
	Assuming the condition $\varepsilon< 0~\mbox{or}~ \frac{2}{\pi(\Delta \tau)^2}<\varepsilon$ is satisfied, the map $\widetilde{T}_{\varepsilon, \Delta \tau}$ on $M$ qualifies as an Anosov diffeomorphism.
\end{theorem}
\setcounter{theorem}{5}
\renewcommand{\thetheorem}{\arabic{theorem}}

Before delving into the proof, we note the following: a point, $(s_n, t_n)$, of $M$ is denoted as $x_n = (s_n, t_n)$.
The Jacobian of the map $\widetilde{T}_{\varepsilon, \Delta \tau}$ is expressed as follows:
\begin{eqnarray}
	J(x_n) = \left(
	\begin{array}{cc}
		0 & 1\\
		-1 & 2- \frac{2\pi \varepsilon(\Delta \tau)^2}{\cos^2(\pi t_n)}
	\end{array}
	\right), \label{Jacobi2}
\end{eqnarray}
where the components of the Jacobian $J(x_n)$ are on $\mathbb{R}$. The linear instability condition, where at least one eigenvalue of the matrix (\ref{Jacobi2}), has an  absolute value greater than unity
for the map $\widetilde{T}_{\varepsilon, \Delta \tau}$ is expressed as follows:
\begin{eqnarray}
	\varepsilon<0 ~~\mbox{or}~~\frac{2}{\pi(\Delta \tau)^2}< \varepsilon. \label{Condition2}
\end{eqnarray}
The linear instability condition is related to the condition that the map 
$\widetilde{T}_{\varepsilon, \Delta\tau}$ is an Anosov diffeomorphism.

An Anosov diffeomorphism is defined as \cite{katok1995introduction,gallavotti2014nonequilibrium,franks1969anosov,bowen1975equilibrium}, with the  theorem proven as follows:

\begin{definition}[Anosov diffeomorphism]\label{Def: Anosov diffeomorphism}
	Let $\mathcal{M}^*$ is a compact manifold, $\|\cdot\|$ an Euclidean norm, and $f: \mathcal{M}^*\to \mathcal{M}^*$ a diffeomorphism. 
	At each point $x\in \mathcal{M}^*$, consider the tangent space, $T_x\mathcal{M}^*$.
	{$f$ is considered an Anosov diffeomorphism if it} satisfies the following conditions:
	\begin{enumerate}
		\renewcommand{\labelenumi}{(\Roman{enumi})}
		\item Subspaces $E_x^u$ and $E_x^s$ exist such that $T_x\mathcal{M}^* = E_x^u \oplus E_x^s$,
		\item $(D_xf)E_x^u = E_{f(x)}^u$,
		$(D_xf)E_x^s = E_{f(x)}^s$,
		\item $K>0$ and $0<\eta<1$ exist determined by $x$ and not by $\bm{\xi}$ or $n$ such that
		\begin{eqnarray}
			\bm{\xi}\in E_x^u \Longrightarrow \|(Df^n)\bm{\xi}\| &\geq& K \left(\frac{1}{\eta}\right)^n \|\bm{\xi}\|, \label{stretching}\\ 
			\bm{\xi}\in E_x^s \Longrightarrow \|(Df^{-n})\bm{\xi}\| &\geq&  K \left(\frac{1}{\eta}\right)^n \|\bm{\xi}\|. \label{shrinking}
		\end{eqnarray}
	\end{enumerate}
\end{definition}

The proof that the map $\widetilde{T}_{\varepsilon, \Delta \tau}$ is an Anosov diffeomorphism for $\varepsilon< 0, \frac{2}{\pi(\Delta \tau)^2}<\varepsilon$ is based on the concept presented in \cite{zaslavsky1985chaos}. 
While the Jacobians for Arnold's cat map \cite{arnold1968ergodic,ermann2012arnold} and multi-baker's map \cite{tasaki1995fick} remain constant, the Jacobians for our map vary at different points, 
and the aforementioned maps are extremely ideal compared with 
our maps. Compared with Arnold's cat and multi-baker's maps, our maps are more complex and challenging to establish as Anosov diffeomorphisms.
To establish this, the conditions \textit{(I)}, \textit{(II)}, and \textit{(III)} in Definition \ref{Def: Anosov diffeomorphism} should be satisfied. 
To validate the fulfilment of these conditions, we first define the cones $L^+$ and $L^-$, followed by the presentation of the following Lemmas:

\begin{definition}
	For every two-dimensional nonzero vector on $T_{x_n}M, x_n \in M, n \in \mathbb{Z}$,
	\begin{eqnarray*}
		\bm{a}_n= (a_1(n), a_2(n))^T \neq \bm{O},
	\end{eqnarray*}
	cones $L^+$ and $L^-$ are defined as follows:
	\begin{eqnarray*}
		L^+(x_n)  &=& \{(a_1(n), a_2(n))^T: \| J(x_n)\bm{a}_n\| > \|\bm{a}_n\|\},\\
		L^-(x_n)  &=& \{(a_1(n), a_2(n))^T: \|J(\widetilde{T}_{\varepsilon,\Delta \tau}^{-1}x_n)^{-1}\bm{a}_n\| > \|\bm{a}_n\|\},
	\end{eqnarray*}
	where $\|\cdot\|$ represents $\|\bm{a}(n)\|\equiv \sqrt{a_1^2(n)+a_2^2(n)}$.
\end{definition}

\begin{lemma}\label{Lemma:L+}
	The satisfaction of the condition $\varepsilon< 0, \frac{2}{\pi(\Delta \tau)^2}<\varepsilon$ results in the following equation: 
	\begin{eqnarray}
		{}^\forall\bm{a}_n \in L^+(x_n), J(x_n)\bm{a}_n \in L^+(\widetilde{T}_{\varepsilon,\Delta \tau} x_n)=L^+(x_{n+1}),  \label{Mixing condition1}
	\end{eqnarray}
\end{lemma}
\noindent This Lemma corresponds to condition \textit{(II)}.

\begin{proof}
	From simple calculations, $\bm{a}_n \notin L^+(x_n)$ and $\bm{a}_n \notin L^-(x_n)$ 
	when $a_2(n) =0$ and $a_1(n) = 0$.
	Suppose $a_2(n) \neq 0$ and $a_1(n) \neq 0$. In that case, the condition $\bm{a}_n \in L^+(x_n)$ in (\ref{Mixing condition1}) is expressed as follows:
	\begin{equation}
		\bm{a}_n \in L^+(x_n)
		\Leftrightarrow \left\lbrace 
		\begin{array}{cl}
			\frac{a_1(n)}{a_2(n)} > 1-\frac{\pi\varepsilon(\Delta \tau)^2}{\cos^2(\pi t_n)},& ~\mbox{for}~\varepsilon > \frac{2}{\pi(\Delta \tau)^2},\\
			\frac{a_1(n)}{a_2(n)} < 1-\frac{\pi\varepsilon(\Delta \tau)^2}{\cos^2(\pi t_n)},& ~\mbox{for}~\varepsilon < 0.
		\end{array}
		\right. \label{L^+}
	\end{equation}
	Selecting $\bm{a}_{n+1}$ as $J(x_n)\bm{a}_n = (a_1(n+1), a_2(n+1))$, the condition can be expressed as follows: 
	$\bm{a}_{n+1} \in L^+(\widetilde{T}_{\varepsilon, \Delta \tau} x_n)$ in (\ref{Mixing condition1}) as
	\begin{equation}
		\hspace*{-2cm}
		\bm{a}_{n+1} \in L^+(\widetilde{T}_{\varepsilon, \Delta \tau} x_n)=L^+(x_{n+1})
		\Leftrightarrow 
		\left\lbrace
		\begin{array}{lll}
			\frac{a_1(n+1)}{a_2(n+1)} &> 1-\frac{\pi\varepsilon(\Delta \tau)^2}{\cos^2(\pi t_{n+1})},&~\mbox{for}~\varepsilon > \frac{2}{\pi(\Delta \tau)^2},\\
			\frac{a_1(n+1)}{a_2(n+1)} &< 1-\frac{\pi\varepsilon(\Delta \tau)^2}{\cos^2(\pi t_{n+1})},&~\mbox{for}~\varepsilon < 0.
		\end{array}
		\right. 
		\label{L^+2}
	\end{equation}
	To establish (\ref{Mixing condition1}), we demonstrate that (\ref{L^+}) implies (\ref{L^+2}).
	
	\noindent 	by substituting $a_1(n+1)=a_2(n)$ and $a_2(n+1)= -a_1(n) +2\left(1-\frac{\pi\varepsilon(\Delta \tau)^2}{\cos^2(\pi t_n)}\right)a_2(n)$
	into (\ref{L^+2}), we obtain
	\begin{equation}
		\hspace*{-1cm}
		\frac{a_1(n+1)}{a_2(n+1)}=\frac{a_2(n)}{-a_1(n) +2\left(1-\frac{\pi\varepsilon(\Delta \tau)^2}{\cos^2(\pi t_n)}\right)a_2(n)}
		=\frac{1}{-\frac{a_1(n)}{a_2(n)} +2\left(1-\frac{\pi\varepsilon(\Delta \tau)^2}{\cos^2(\pi t_n)}\right)}.
	\end{equation}
	\noindent Therefore, when the condition in Eq. (\ref{L^+}) is satisfied, the condition in Eq. (\ref{L^+2}) is also satisfied, as it can be shown inductively that 
	\begin{eqnarray*}
		\left\lbrace 
		\begin{array}{lll}
			~{}^\forall k\geq n+1, & 1-\frac{\pi\varepsilon(\Delta \tau)^2}{\cos^2(\pi t_{k})}<\frac{a_1(k)}{a_2(k)}<0,&~\mbox{for}~\varepsilon>\frac{2}{\pi(\Delta \tau)^2},\\
			~{}^\forall k\geq n+1, & 0<\frac{a_1(k)}{a_2(k)}<1-\frac{\pi\varepsilon(\Delta \tau)^2}{\cos^2(\pi t_{k})},&~\mbox{for}~\varepsilon<0.
		\end{array}
		\right.
	\end{eqnarray*}
	Therefore, the condition in Eq. (\ref{Mixing condition1}) is satisfied. 	
\end{proof}

\begin{lemma}\label{Lemma:L-}
	When the condition $\varepsilon< 0, \frac{2}{\pi(\Delta \tau)^2}<\varepsilon$ is satisfied, the following equation is obtained: 
	\begin{eqnarray}
		{}^\forall \bm{a}_n \in L^-(x_n), J(x_{n-1})^{-1}\bm{a}_n =\bm{a}_{n-1} \in L^-(\widetilde{T}_{\varepsilon,\Delta \tau}^{-1} x_n)=L^-(x_{n-1}) . \label{Mixing condition2}
	\end{eqnarray}
\end{lemma}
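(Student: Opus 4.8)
The plan is to rerun the reciprocal-slope computation of Lemma~\ref{Lemma:L+}, now for the inverse map and the contracting cone. Two preliminary remarks make the argument short. Writing $x_{n-1}=\widetilde{T}_{\varepsilon,\Delta\tau}^{-1}x_n$, the chain rule gives $D\widetilde{T}_{\varepsilon,\Delta\tau}^{-1}|_{x_n}=J(x_{n-1})^{-1}$, and since $\det J(x_{n-1})=1$ one has
\[
J(x_{n-1})^{-1}=\left(\begin{array}{cc} 2-\dfrac{2\pi\varepsilon(\Delta\tau)^2}{\cos^2(\pi t_{n-1})} & -1\\[1ex] 1 & 0\end{array}\right).
\]
Moreover, by \eqref{2D-map} the first coordinate of $x_n$ equals the second coordinate of $x_{n-1}$, i.e. $t_{n-1}=s_n$, so the only $t$-dependent entry of $J(x_{n-1})^{-1}$ is controlled by $\cos^2(\pi s_n)$, a coordinate of the very point at which the cone $L^-(x_n)$ sits. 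Writing $\kappa_m:=\pi\varepsilon(\Delta\tau)^2/\cos^2(\pi t_m)$ for brevity, note that $|1-\kappa_m|>1$ always: $\kappa_m<0$ when $\varepsilon<0$, while $\kappa_m\ge\pi\varepsilon(\Delta\tau)^2>2$ when $\varepsilon>\frac{2}{\pi(\Delta\tau)^2}$.

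First I would recast the cone conditions in terms of slopes, exactly as for $L^+$ in Lemma~\ref{Lemma:L+} but with the inequality flipped: computing $\|J(x_n)\bm{a}_n\|^2-\|\bm{a}_n\|^2$ for $\bm{a}_n=(a_1(n),a_2(n))$ with $a_2(n)\neq0$ and $r:=a_1(n)/a_2(n)$ shows
\[
\bm{a}_n\in L^-(x_n)\ \Longleftrightarrow\ \left\{\begin{array}{ll} r>1-\kappa_n, & \varepsilon<0,\\ r<1-\kappa_n, & \varepsilon>\dfrac{2}{\pi(\Delta\tau)^2}.\end{array}\right.
\]
In particular any $\bm{a}_n\in L^-(x_n)$ has $a_2(n)\neq0$, $a_1(n)\neq0$ and $|r|>1$. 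Then $\bm{b}:=J(x_{n-1})^{-1}\bm{a}_n$ has second coordinate $b_2=a_1(n)\neq0$ and first coordinate $b_1=(2-2\kappa_{n-1})a_1(n)-a_2(n)$, so its slope is $b_1/b_2=(2-2\kappa_{n-1})-1/r$.

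The final step is to check that $b_1/b_2$ satisfies the $L^-(x_{n-1})$ inequality above with $n$ replaced by $n-1$; this gives $\bm{b}\in L^-(x_{n-1})=L^-(\widetilde{T}_{\varepsilon,\Delta\tau}^{-1}x_n)$, which is \eqref{Mixing condition2}. Since $|r|>1$ one has $|1/r|<1$, so $b_1/b_2$ lies strictly within distance $1$ of $2-2\kappa_{n-1}$; together with $|1-\kappa_{n-1}|>1$ and the sign of $\kappa_{n-1}$ in each regime this is an elementary estimate closing the inequality (for $\varepsilon<0$, $b_1/b_2>1-2\kappa_{n-1}>1-\kappa_{n-1}$; for $\varepsilon>\frac{2}{\pi(\Delta\tau)^2}$, $b_1/b_2<3-2\kappa_{n-1}<1-\kappa_{n-1}$). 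Unlike Lemma~\ref{Lemma:L+}, no induction along the orbit is required: belonging to $L^-(x_n)$ already confines $r$ to $|r|>1$, and the target condition then follows in one step. The only point needing care --- and hence the main, if modest, obstacle --- is the bookkeeping that feeds $\cos^2(\pi s_n)$ rather than $\cos^2(\pi t_n)$ into $J(x_{n-1})^{-1}$, together with keeping the two inequality directions straight across the two parameter windows; the rest is the same slope algebra as in Lemma~\ref{Lemma:L+}.
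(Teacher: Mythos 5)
Your proof is correct and follows essentially the same route as the paper's: both recast membership in $L^-$ as a slope inequality relative to $1-\frac{\pi\varepsilon(\Delta\tau)^2}{\cos^2(\pi t)}$, push the vector through $J(x_{n-1})^{-1}$ via the substitution $a_1(n-1)=2\bigl(1-\frac{\pi\varepsilon(\Delta\tau)^2}{\cos^2(\pi t_{n-1})}\bigr)a_1(n)-a_2(n)$, $a_2(n-1)=a_1(n)$, and check the slope condition at $x_{n-1}$. The only difference is that you explicitly supply the closing estimate (using $|a_1(n)/a_2(n)|>1$ on the cone and $\bigl|1-\frac{\pi\varepsilon(\Delta\tau)^2}{\cos^2(\pi t_{n-1})}\bigr|>1$ in each parameter regime), which the paper leaves as ``one can see.''
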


\begin{proof}
	The conditions $\bm{a}_n \in L^-(x_n)$ and 
	$J(x_{n-1})^{-1}\bm{a}_n \in L^-(\widetilde{T}_{\varepsilon, \Delta \tau}^{-1}x_n)$ in (\ref{Mixing condition2}) are expressed as follows:
	\begin{equation}
		\bm{a}_n \in L^-(x_n) \Leftrightarrow
		\left\lbrace 
		\begin{array}{cl}
			\frac{a_2(n)}{a_1(n)} > 1-\frac{\pi\varepsilon(\Delta \tau)^2}{\cos^2(\pi t_{n-1})},& ~\mbox{for}~ \varepsilon > \frac{2}{\pi(\Delta \tau)^2},\\
			\frac{a_2(n)}{a_1(n)} < 1-\frac{\pi\varepsilon(\Delta \tau)^2}{\cos^2(\pi t_{n-1})},& ~\mbox{for}~ \varepsilon < 0,
		\end{array}
		\right. \label{L^-}
	\end{equation}
	and 
	\begin{equation}
		\begin{array}{lll}
			&&\bm{a}_{n-1}\equiv J(x_{n-1})^{-1}\bm{a}_n \in L^-(\widetilde{T}_{\varepsilon, \Delta \tau}^{-1}x_n)=L^-(x_{n-1})\\
			&\Leftrightarrow & \left\lbrace 
			\begin{array}{cl}
				\frac{a_2(n-1)}{a_1(n-1)} > 1-\frac{\pi\varepsilon(\Delta \tau)^2}{\cos^2(\pi t_{n-2})},& ~\mbox{for}~ \varepsilon > \frac{2}{\pi(\Delta \tau)^2},\\
				\frac{a_2(n-1)}{a_1(n-1)} < 1-\frac{\pi\varepsilon(\Delta \tau)^2}{\cos^2(\pi t_{n-2})},& ~\mbox{for}~ \varepsilon < 0.
			\end{array}
			\right.
		\end{array}
		\label{L^-2}
	\end{equation}
	
	By substituting $a_1(n-1) = 2\left(1-\frac{\pi\varepsilon(\Delta \tau)^2}{\cos^2(\pi t_{n-1})}\right)a_1(n)-a_2(n)$ and 
	$a_2(n-1) = a_1(n)$ into (\ref{L^-2}), we can deduce that the relationship in Eq. (\ref{L^-2}) is valid when the relationship in Eq. (\ref{L^-}) is satisfied because by induction:
    \begin{eqnarray*}
        \left\lbrace
        \begin{array}{lll}
             {}^\forall k \leq n-1, &1-\frac{\pi\varepsilon(\Delta \tau)^2}{\cos^2(\pi t_{k-1})} < \frac{a_2(k)}{a_1(k)}<0,&~\mbox{for}~ \varepsilon > \frac{2}{\pi(\Delta \tau)^2},\\
             {}^\forall k \leq n-1, &0< \frac{a_2(k)}{a_1(k)}<1-\frac{\pi\varepsilon(\Delta \tau)^2}{\cos^2(\pi t_{k-1})},&~\mbox{for}~ \varepsilon<0.
        \end{array}
        \right. 
    \end{eqnarray*}
\end{proof}

\begin{definition}
A subset $LL_{1}^{-}(x_n)$ of $L^-(x_n)$ is defined as follows:
	\begin{equation}
			LL_{1}^-(x_n)
			\equiv
			\left\lbrace 
			\begin{array}{ll}
				\left\lbrace (a_1(n), a_2(n)) :  -1<\frac{a_2(n)}{a_1(n)}<0\right\rbrace  & 
				\mbox{for}~\varepsilon> \frac{2}{\pi(\Delta \tau)^2},\\
				\left\lbrace (a_1(n), a_2(n)) :  0<\frac{a_2(n)}{a_1(n)}< 1\right\rbrace  & 
				\mbox{for}~\varepsilon<0.
			\end{array}
			\right. 
         \label{LL-1}
	\end{equation}
\end{definition}

After a straightforward calculation, we obtain
\begin{eqnarray}
	{}^\forall\bm{a}_n \in LL_{1}^-(x_n), J(x_{n-1})^{-1}\bm{a}_n \in LL_{1}^-(x_{n-1}). \label{LL-1-1 honbun}
\end{eqnarray}

\begin{lemma}\label{Lemma:LL-}
	Assuming the condition $\varepsilon< 0, \frac{2}{\pi(\Delta \tau)^2}<\varepsilon$ is satisfied, for any point $x_n$ and any integer $k \in \mathbb{Z}\geq n$, a deviation vector $\bm{a}_n$ exists such that
	\begin{eqnarray}
		\displaystyle \prod_{i=k}^{n}J(x_i)\bm{a}_n \in LL_1^-(x_{k+1}),~\bm{a}_n \in LL_1^-(x_n).
	\end{eqnarray}
\end{lemma}

\begin{proof}
	We employed a proof by contradiction.
	Assuming
	\begin{equation}
		{}^\exists y_n, {}^\exists k' \in \mathbb{Z}\geq n, {}^\forall \bm{a}_n \in LL_1^-(y_n)
		~~\mbox{s.t.}~ \prod_{i=k'}^{n}J(y_i)\bm{a}_n \notin LL_1^-(y_{k'+1}). \label{Assumption}
	\end{equation}
	An orbit, $\{y_m\}$, can be obtained, including $y_n$ and $y_{n'}$, where $n' > k'+1 > n$. Owing to the existence of a cone at all points, 
	$LL_1^-${. Therefore,} 
	a vector $\bm{A}_{n'} \in LL_1^-(y_{n'})$ exists.  Therefore, $\bm{A}_{k'+1}$ and $\bm{A}_{n}$ can be selected as follows:
	\begin{eqnarray}
		\bm{A}_{k'+1} &\equiv& \prod_{i=k'+1}^{n'-1}J(y_i)^{-1}\bm{A}_{n'} \in LL_{1}^-(y_{k'+1})~\because (\ref{Mixing condition2}),\nonumber\\
		\bm{A}_{n} &\equiv& \prod_{i=n}^{n'-1}J(y_i)^{-1}\bm{A}_{n'} \in LL_{1}^-(y_n),\nonumber\\
		\mbox{Therefore},~\prod_{i=k'}^{n}J(y_i)\bm{A}_n &\in& LL_{1}^-(y_{k'+1}). \label{contradiction}
	\end{eqnarray}
	Equation (\ref{contradiction}) is contradictory to (\ref{Assumption}). Therefore, Lemma \ref{Lemma:LL-} is valid.
\end{proof}

\noindent From Lemmas \ref{Lemma:L+}, \ref{Lemma:L-}, and \ref{Lemma:LL-}, $LL^-(x_n)$ and $LL^+(x_n)$ are defined to prove that conditions \textit{(I)}, \textit{(II)}, and \textit{(III)} are satisfied:
\begin{definition}
	{A subset}, $LL^{-}(x_n)$, of $L^-(x_n)$ is defined {as} follows:
	\begin{equation}
		\hspace*{-2cm}
		LL^{-}(x_n) \equiv \left\lbrace (a_1(n), a_2(n))^T; 
		\bm{a}_n \in LL_1^-(x_n), 
		\prod_{i=k}^{n}J(x_i)\bm{a}_n \in LL_1^-(x_{k+1}), {}^\forall k \geq n\right\rbrace \label{LL-}.
	\end{equation}
\end{definition}

\noindent From Lemma \ref{Lemma:LL-}, we obtain the following: 
\begin{equation}
	~{}^\forall \bm{a}_n \in LL^-(x_n), J(x_n)\bm{a}_n \in LL^-(x_{n+1}). \label{LL--}
\end{equation}

\begin{definition}
	{The subset} $LL^+(x_n)$ of $L^+(x_n)$ is defined {as}
	\begin{equation}
		LL^+(x_n) 	\equiv
		\left\lbrace 
		\begin{array}{ll}
			\left\lbrace (a_1(n), a_2(n))^T :  -1<\frac{a_1(n)}{a_2(n)}<0\right\rbrace  & 
			\mbox{for}~\varepsilon> \frac{2}{\pi(\Delta \tau)^2},\\
			\left\lbrace (a_1(n), a_2(n))^T :  0<\frac{a_1(n)}{a_2(n)} < 1 \right\rbrace  & 
			\mbox{for}~\varepsilon<0.
		\end{array}
		\right. \label{LL+}
	\end{equation}
\end{definition}

\noindent 	Based on a straightforward calculation, the following equation is satisfied:
\begin{eqnarray}
	{}^\forall\bm{a}_n \in LL^+(x_n), J(x_n)\bm{a}_n \in LL^+(x_{n+1}). \label{LL++}
\end{eqnarray}

\begin{lemma}\label{Condition I and II}
	Conditions \textit{(I)} and \textit{(II)} 
  are satisfied when the condition $\varepsilon< 0~\mbox{or}~ \frac{2}{\pi(\Delta \tau)^2}<\varepsilon$ is satisfied.
\end{lemma}

\begin{proof}
	The eigenvector spaces $E_{x_n}^u$ and $E_{x_n}^s$ can be selected from $LL^+(x_n)$ and $LL^-(x_n)$, respectively{. Hence,} a combination of (\ref{LL-}) and (\ref{LL+}) results in:	\begin{eqnarray}
		E_{x_n}^u \subset LL^+(x_n),	~~E_{x_n}^s \subset LL^-(x_n). \label{eigenspace subset}
	\end{eqnarray}
	A combination of (\ref{LL--}), (\ref{LL++}), and (\ref{eigenspace subset}), results in	$J(x_n)  E_{x_n}^u \subset LL^+(x_{n+1})$ and 
	$J(x_n)  E_{x_n}^s \subset LL^-(x_{n+1})$. 
	Therefore, $E_{x_{n+1}}^u$ {and} $E_{x_{n+1}}^s$ can be {inductively selected} as follows:
	\begin{eqnarray}
		E_{x_{n+1}}^u \equiv J(x_n)   E_{x_n}^u, ~~
		E_{x_{n+1}}^s \equiv J(x_n)   E_{x_n}^s.
	\end{eqnarray}
	{The subspaces} $E_{x_n}^u$ and $E_{x_n}^s$ are linearly independent by definition, from which  
	
	\begin{eqnarray*}
		T_{x_n}M = E_{x_n}^u\oplus E_{x_n}^s.
	\end{eqnarray*}
\end{proof}

\begin{lemma}\label{Condition III}
	The condition \textit{(III)} is satisfied when the condition $\varepsilon< 0, \frac{2}{\pi(\Delta \tau)^2}<\varepsilon$ is satisfied. 
\end{lemma}

\begin{proof}
	We define $\alpha_n \equiv \left(1-\frac{\pi\varepsilon(\Delta \tau)^2}{\cos^2(\pi t_n)}\right)$ and
	the stretching rate $\sigma(x_n, \bm{a}_n) \equiv \frac{\|J(x_n)\bm{a}_n\|^2}{\|\bm{a}_n\|^2}$, which can be expressed as follows:
	\begin{eqnarray*}
		\sigma(x_n, \bm{a}_n)&=&\frac{a_1^2+a_2^2 -4\alpha_na_1a_2+4\alpha_n^2a_2^2}{a_1^2+a_2^2},\\
		&=& 1 -4\alpha_n\frac{a_1a_2}{a_1^2+a_2^2}
		+4\alpha_n^2\frac{a_2^2}{a_1^2+a_2^2},\\
		&=& 1+4\alpha_n\left( \alpha_n\sin^2\phi_n-\sin\phi_n\cos\phi_n\right),
	\end{eqnarray*}
	where $\phi_n$ is defined such that $\sin^2\phi_n = \frac{a_2^2(n)}{a_1^2(n)+a_2^2(n)}$ and 
	$\sin\phi_n\cos\phi_n =\frac{a_1(n)a_2(n)}{a_1^2(n)+a_2^2(n)},~-\pi< \phi\leq\pi$.
	Therefore, when $\bm{a}_n\in LL^+(x_n)$, the following equation is obtained:
	\begin{equation}
		F(\phi_n) \equiv \alpha_n\sin^2\phi_n-\sin\phi_n\cos\phi_n
		\left\lbrace 
		\begin{array}{cc}
			<0, &\varepsilon>\frac{2}{\pi(\Delta \tau)^2},\\
			>0, &\varepsilon<0.
		\end{array}\right.
	\end{equation}
	
	Two cases are worth considering for $\varepsilon$. The minimum value of $\sigma$ for each case can be determined as follows:
	
	\noindent(i) $\varepsilon> \frac{2}{\pi(\Delta \tau)^2}$,
	
	assuming that 
	$\bm{a}_n\in LL^+(x_n)\Leftrightarrow -1 < \frac{a_1(n)}{a_2(n)}= \cot\phi_n<0$, where $\phi_n$ is defined in the range 
	\begin{eqnarray}
		-\frac{\pi}{2} < \phi_n < -\frac{\pi}{4}, \mbox{or}~\frac{\pi}{2} < \phi_n < \frac{3}{4}\pi.
	\end{eqnarray}
	Because $F'(\phi)=\sin(2\phi)\left(\alpha_n - \cot(2\phi)\right)$ is positive in this range, the following equation is valid:
	\begin{equation}
	\begin{array}{rll}
	F(\phi_n) &<& F\left(-\frac{\pi}{4}\right)=F\left(\frac{3}{4}\pi\right) = \frac{1}{2}(\alpha_n +1)<0,\\
	\therefore \sigma(x_n, \bm{a}_n) &>& 1 +2\alpha_n(\alpha_n+1)\\
	&\geq& 1+ 2\left\lbrace 1-\pi\varepsilon(\Delta\tau)^2\right\rbrace \left\lbrace 2-\pi\varepsilon(\Delta\tau)^2\right\rbrace .
	\end{array}
	\end{equation}
	Therefore, by selecting $\displaystyle K = 1$, 
	$\frac{1}{\eta} = \sqrt{1+ 2\left\lbrace 1-\pi\varepsilon(\Delta\tau)^2\right\rbrace \left\lbrace 2-\pi\varepsilon(\Delta\tau)^2\right\rbrace }>1$, indicating that condition (\ref{stretching}) is satisfied.

	\noindent(ii) Case of $\varepsilon <0$,
	
	For $0< \cot\phi_n < 1$,
	the domain of $\phi_n$ is expressed as follows:
	\begin{eqnarray}
		-\frac{3}{4}\pi < \phi_n < -\frac{\pi}{2}, \mbox{or}~ \frac{\pi}{4} < \phi_n < \frac{\pi}{2}.
	\end{eqnarray}
	Because $F'(\phi)$ is positive in this range, the following equation can be obtained:  
	\begin{equation}
	\begin{array}{rll}
	F(\phi_n) &> & F\left(-\frac{3}{4}\pi\right) = F\left(\frac{\pi}{4}\right) = \frac{1}{2}(\alpha_n -1)>0,\\
	\therefore	\sigma(x_n, \bm{a}_n) &>& 1 +2\alpha_n(\alpha_n-1)\\
	&\geq& 1-2\pi\varepsilon(\Delta\tau)^2\left\lbrace 1-\pi\varepsilon(\Delta\tau)^2\right\rbrace .
	\end{array}
	\end{equation}
	Therefore, by selecting $\displaystyle K = 1$, 
	$\frac{1}{\eta} = \sqrt{1-2\pi\varepsilon(\Delta\tau)^2\left\lbrace 1-\pi\varepsilon(\Delta\tau)^2\right\rbrace }>1$, indicating that condition (\ref{stretching}) is satisfied.
	Furthermore, the relationship regarding $\eta$ ($\eta \to 1-0$ as $\varepsilon \to -0$) implies that the Lyapunov exponent converges to zero as $\varepsilon \to -0$.
	
	The stretching rate is defined as follows:
	$\sigma'(x_n, \bm{a}_n) \equiv \frac{\|J(\widetilde{T}_{\varepsilon,\Delta \tau}^{-1}x_n)^{-1}\bm{a}_n\|^2}{\|\bm{a}_n\|^2}
	=\frac{\|J(x_{n-1})^{-1}\bm{a}_n\|^2}{\|\bm{a}_n\|^2}$,
	which can be expressed as follows:
	\begin{eqnarray*}
		\sigma'(x_n, \bm{a}_n)&=&\frac{a_1^2+a_2^2 -4\alpha_{n-1}a_1a_2+4\alpha_{n-1}^2a_1^2}{a_1^2+a_2^2},\\
		&=& 1 -4\alpha_{n-1}\frac{a_1a_2}{a_1^2+a_2^2}
		+4\alpha_{n-1}^2\frac{a_1^2}{a_1^2+a_2^2},\\
		&=& 1+4\alpha_{n-1}\left( \alpha_{n-1}\sin^2\phi_n'-\sin\phi_n'\cos\phi_n'\right),
	\end{eqnarray*}
	where $\phi_n'$ is defined such that $\sin^2\phi_n' = \frac{a_1^2(n)}{a_1^2(n)+a_2^2(n)}$ and 
	$\sin\phi_n'\cos\phi_n' =\frac{a_1(n)a_2(n)}{a_1^2(n)+a_2^2(n)},~-\pi< \phi'\leq\pi$.
	Therefore, if $\bm{a}_n\in LL^-(x_n)$, the following equation is valid:
	\begin{equation}
		G(\phi_n') \equiv\alpha_{n-1}\sin^2\phi_n'-\sin\phi_n'\cos\phi_n'
		\left\lbrace 
		\begin{array}{cc}
			<0, &\varepsilon>\frac{2}{\pi(\Delta \tau)^2},\\
			>0, &\varepsilon<0.
		\end{array}\right.
	\end{equation}
	
	Two cases should be considered for $\varepsilon$. The minimum value of $\sigma'$ for each case is calculated as follows.
	
	\noindent(i) Case of $\varepsilon> \frac{2}{\pi(\Delta \tau)^2}$:
	
	Assuming 
	$\bm{a}_n\in LL^-(x_n)\Leftrightarrow -1 < \frac{a_2(n)}{a_1(n)}= \cot\phi_n'<0$, $\phi_n'$ is defined in the range 
	\begin{eqnarray}
		-\frac{\pi}{2} < \phi_n' < -\frac{\pi}{4}, \mbox{or}~\frac{\pi}{2} < \phi_n' <\frac{3}{4}\pi.
	\end{eqnarray}
	Similar to the aforementioned analysis, the following equation is satisfied:
	\begin{equation}
	\begin{array}{rll}
	G(\phi_n') &<&  \frac{1}{2}(\alpha_{n-1} +1)<0,\\
	\therefore \sigma'(x_n, \bm{a}_n) &>& 1 +2\alpha_{n-1}(\alpha_{n-1}+1)\\
	&\geq& 1+ 2\left\lbrace 1-\pi\varepsilon(\Delta\tau)^2\right\rbrace \left\lbrace 2-\pi\varepsilon(\Delta\tau)^2\right\rbrace .
	\end{array}
	\end{equation}
	Therefore, by selecting $\displaystyle K = 1$, 
	$\frac{1}{\eta} = \sqrt{1+ 2\left\lbrace 1-\pi\varepsilon(\Delta\tau)^2\right\rbrace \left\lbrace 2-\pi\varepsilon(\Delta\tau)^2\right\rbrace }>1$, the condition (\ref{shrinking}) is satisfied.
	
	\noindent(ii) Case of $\varepsilon <0$:
	
	For $0< \cot\phi_n' < 1$, the domain of $\phi_n'$ is defined as:
	\begin{eqnarray}
		-\frac{3}{4}\pi < \phi_n' < -\frac{\pi}{2}, \mbox{or}~ \frac{\pi}{4} < \phi_n' < \frac{\pi}{2}.
	\end{eqnarray}
	Therefore, similar to the aforementioned analysis, the following equation is satisfied:
	\begin{equation}
	\begin{array}{rll}
	G(\phi_n') &> &  \frac{1}{2}(\alpha_{n-1} -1)>0,\\
	\therefore	\sigma'(x_n, \bm{a}_n) &>& 1 +2\alpha_{n-1}(\alpha_{n-1}-1)\\
	&\geq& 1-2\pi\varepsilon(\Delta\tau)^2\left\lbrace 1-\pi\varepsilon(\Delta\tau)^2\right\rbrace .
	\end{array}
	\end{equation}
	Therefore, by selecting $\displaystyle K = 1$, 
	$\frac{1}{\eta} = \sqrt{1-2\pi\varepsilon(\Delta\tau)^2\left\lbrace 1-\pi\varepsilon(\Delta\tau)^2\right\rbrace }>1$,
	condition (\ref{shrinking}) is satisfied.
\end{proof}

\noindent Therefore, we present the first main theorem in this study.
\setcounter{theorem}{0}
\renewcommand{\thetheorem}{\Alph{theorem}}
\begin{theorem}\label{The:1}
	Suppose the condition $\varepsilon< 0~\mbox{or}~ \frac{2}{\pi(\Delta \tau)^2}<\varepsilon$ is satisfied.
	The map $\widetilde{T}_{\varepsilon, \Delta \tau}$ on $M$ is an Anosov diffeomorphism.
\end{theorem}
\setcounter{theorem}{15}
\renewcommand{\thetheorem}{\arabic{theorem}}
\begin{proof}
	From Lemma \ref{Condition I and II} and Lemma \ref{Condition III},
	the map $\widetilde{T}_{\varepsilon, \Delta \tau}$ on $M$ is an Anosov diffeomorphism.
\end{proof}

\section{Statistical properties when the linear instability condition is satisfied}
\label{sec: Statistical properties when the local instability condition is satisfied}
Here, we present some statistical properties in the dynamical system $(M, \widetilde{T}_{\varepsilon, \Delta \tau},\mu)$ where $\mu$ represents the Lebesgue measure. These properties include the uniqueness of the equilibrium state, uniqueness of the SRB measure, and existence of a physical measure that satisfies Pesin's formula, all of which are validated by the Anosov property.

In addition to Theorem A, based on Theorem 20.4.1. in \cite{katok1995introduction}, when a map $f: M\to M$ is an Anosov diffeomorphism and $M$ is a compact connected smooth Riemannian manifold if $f$ possesses a smooth invariant measure, this measure is equivalent to the equilibrium state. Furthermore, the dynamical system exhibits both mixing and topologically mixing properties. Therefore, the following corollary is valid:

\begin{corollary}\label{Cor:1}
Suppose the condition $\varepsilon< 0, \frac{2}{\pi(\Delta \tau)^2}<\varepsilon$ is satisfied. In that case, for the dynamical system, $(M, \widetilde{T}_{\varepsilon, \Delta \tau}, \mu)$ with $\mu$ representing the Lebesgue measure, $\mu$ serves as the equilibrium state. Additionally, the dynamical system $(M, \widetilde{T}_{\varepsilon, \Delta \tau}, \mu)$ demonstrates both mixing and topologically mixing properties.
\end{corollary}

The second theorem can be derived from Corollary \ref{Cor:1} as follows:
\setcounter{theorem}{1}
\renewcommand{\thetheorem}{\Alph{theorem}}
\begin{theorem}\label{The:2}
	Suppose the condition $\varepsilon< 0~\mbox{or}~ \frac{2}{\pi(\Delta \tau)^2}<\varepsilon$ is satisfied. In that case, the pair $(M, \widetilde{T}_{\varepsilon, \Delta \tau})$ conserves the Lebesgue measure as the unique equilibrium state. Furthermore, the dynamical system $(M, \widetilde{T}_{\varepsilon, \Delta \tau},\mu)$ is considered Bernoulli.
\end{theorem}
\setcounter{theorem}{16}
\renewcommand{\thetheorem}{\arabic{theorem}}
\begin{proof}
    Based on Corollary \ref{Cor:1}, the dynamical system $(M, \widetilde{T}_{\varepsilon, \Delta \tau},\mu)$ is considered ergodic if $\varepsilon< 0~\mbox{or}~ \frac{2}{\pi(\Delta \tau)^2}<\varepsilon$. Furthermore, based on \cite{arnold1968ergodic}, the dynamical system $(M, \widetilde{T}_{\varepsilon, \Delta \tau},\mu)$ is indecomposable. 
    Therefore, the non-wandering set of the map $\widetilde{T}_{\varepsilon, \Delta \tau}$ is defined as $NW(\widetilde{T}_{\varepsilon, \Delta \tau})$. Based on Proposition 18.6.5. in \cite{katok1995introduction}, 
    $NW(\widetilde{T}_{\varepsilon, \Delta \tau})=\mathbb{T}^2=M$ was established. Therefore, by employing spectral decomposition \cite{bowen1975equilibrium,brin2002introduction}, the non-wandering set can be represented by a unique basic set $\Omega_1$, such that $NW(\widetilde{T}_{\varepsilon, \Delta \tau})=M=\Omega_1$. Anosov's Closing Lemma \cite{bowen1975equilibrium} validates that $\widetilde{T}_{\varepsilon, \Delta \tau}$ is an Axiom A diffeomorphism. Hence, from Theorem 4.1. in \cite{bowen1975equilibrium}, the pair $(M, \widetilde{T}_{\varepsilon, \Delta \tau})$ posseses a unique equilibrium state. Furthermore, from Corollary \ref{Cor:1}, the Lebesgue measure $\mu$ serves as an equilibrium state. Therefore, if the condition $\varepsilon< 0~\mbox{or}~ \frac{2}{\pi(\Delta \tau)^2}<\varepsilon$ is satisfied for the dynamical system $(M, \widetilde{T}_{\varepsilon, \Delta \tau},\mu)$ with $\mu$ denoting the Lebesgue measure, $\mu$ is the unique equilibrium state. In addition, from Corollary \ref{Cor:1} and Theorem 4.1. in \cite{bowen1975equilibrium}, the dynamical system $(M, \widetilde{T}_{\varepsilon, \Delta \tau},\mu)$ is considered Bernoulli.
\end{proof}

\setcounter{theorem}{2}
\renewcommand{\thetheorem}{\Alph{theorem}}
\begin{theorem}\label{The:3}
    For the the dynamical system, $(M, \widetilde{T}_{\varepsilon, \Delta \tau}, \mu)$ where $\mu$ {represents} the Lebesgue measure, Pesin's formula is applicable.
\end{theorem}
\setcounter{theorem}{16}
\renewcommand{\thetheorem}{\arabic{theorem}}
\begin{proof}
    According to \cite{pesin1977characteristic,barreira2007nonuniform}, if a map $f$ defined on a compact Riemannian manifold is a $C^{1+\alpha}$ diffeomorphism and preserves a measure that is absolutely continuous with respect to the Lebesgue measure, then Pesin's formula is valid. Given that the map $\widetilde{T}_{\varepsilon, \Delta \tau}$ is a $C^{1+\alpha}$ map and the invariant measure $\mu$ represents the Lebesgue measure, Pesin's formula is valid for any $\varepsilon$.
\end{proof}

According to \cite{ledrappier1985metric,barreira2007nonuniform}, a hyperbolic measure $\nu$ which is invariant under a $C^{1+\alpha}$ diffeomorphism is an SRB measure if and only if it satisfies the Pesin's formula. Therefore, from Theorem C the following Corollary can be derived:
\begin{corollary}\label{Cor:2}
    Suppose condition $\varepsilon< 0, \frac{2}{\pi(\Delta \tau)^2}<\varepsilon$ is satisfied. In that case, for the dynamical system, $(M, \widetilde{T}_{\varepsilon, \Delta \tau}, \mu)$ with $\mu$ as the Lebesgue measure, $\mu$ is the SRB measure.
\end{corollary}

The third main theorem in this study is introduced as follows:
\setcounter{theorem}{3}
\renewcommand{\thetheorem}{\Alph{theorem}}
\begin{theorem}\label{The:4}
    Suppose the condition $\varepsilon< 0, \frac{2}{\pi(\Delta \tau)^2}<\varepsilon$ is satisfied. In that case, the pair $(M, \widetilde{T}_{\varepsilon, \Delta \tau})$ conserves the Lebesgue measure $\mu$ as the unique SRB measure.
\end{theorem}
\renewcommand{\thetheorem}{\arabic{theorem}}
\begin{proof}
    From Corollary \ref{Cor:1} and Corollary \ref{Cor:2}, the dynamical system $(M, \widetilde{T}_{\varepsilon, \Delta \tau}, \mu)$ exhibits topological transitivity and preserves the Lebesgue measure $\mu$ as the SRB measure for $\varepsilon<0, \frac{2}{\pi(\Delta\tau)^2}<\varepsilon$. Hence, as per Theorem 1.7 from \cite{rodriguez2011uniqueness}, the pair $(M, \widetilde{T}_{\varepsilon, \Delta \tau})$ conserves the Lebesgue measure as the unique SRB measure. 
\end{proof}

The fourth main theorem in this study is introduced as follows:
\setcounter{theorem}{4}
\renewcommand{\thetheorem}{\Alph{theorem}}
\begin{theorem}\label{The:5}
	Suppose the condition $\varepsilon< 0~\mbox{or}~ \frac{2}{\pi(\Delta \tau)^2}<\varepsilon$ is satisfied. In that case, for the dynamical system $(M, \widetilde{T}_{\varepsilon, \Delta \tau},\mu)$ with $\mu$ serving as the Lebesgue measure, $\mu$ is considered a physical measure.
\end{theorem}
\setcounter{theorem}{17}
\renewcommand{\thetheorem}{\arabic{theorem}}
\begin{proof}
    From Theorem A, D, and Corollary \ref{Cor:1}, the pair $(M, \widetilde{T}_{\varepsilon, \Delta \tau})$ conserves the Lebesgue measure as the ergodic SRB measure and does not possess a zero Lyapunov exponent when the condition $\varepsilon< 0~\mbox{or}~ \frac{2}{\pi(\Delta \tau)^2}<\varepsilon$ is satisfied. According to Theorem 3 from \cite{young2002srb}, the ergodic SRB measure $\mu$ serves as the physical measure.
\end{proof}

According to Corollary \ref{Cor:1}, the dynamical system defined as the triplet $(M, \widetilde{T}_{\varepsilon, \Delta \tau},\mu)$ demonstrates the mixing property, with $\mu$ representing the Lebesgue measure. Furthermore, Theorem B, D, and E established the convergence to the uniform distribution derived from the Lebesgue measure, which serves as the unique equilibrium state, unique SRB measure, and physical measure for $(M, \widetilde{T}_{\varepsilon, \Delta \tau},\mu)$. This convergence occurs for any  initial density function defined except for a zero volume set \cite{gallavotti2014nonequilibrium} when the condition (\ref{Condition}) is satisfied. Moreover, the map $\widetilde{T}_{\varepsilon, \Delta \tau}$ demonstrates time-reversal symmetry, ensuring convergence to the uniform distribution.

\section{Positivity of KS entropy and Lyapunov exponent}
\label{sec: Positivity of KS entropy and Lyapunov exponent}

In this section, we initially establish the positivity of the Kolmogorov-Sinai entropy when linear instability condition is satisfied. Subsequently, we validate the theoretical expectation of the largest Lyapunov exponent through numerical simulations.

From Theorem C, the Kolmogorov--Sinai (KS) entropy $h_{{KS}}(\widetilde{T}_{\varepsilon, \Delta \tau})$, representing the average amount of information per unit time, can be derived from Pesin’s formula \cite{barreira2007nonuniform,gaspard1990transport}. This entropy is calculated for almost all initial points as
\begin{eqnarray}
h_{{KS}}(\widetilde{T}_{\varepsilon, \Delta \tau}) &=& \int_M \log  \left\Vert D\widetilde{T}_{\varepsilon, \Delta \tau}|_{E^u}\right\Vert  \mu(dx),\label{KS entropy}
\end{eqnarray}
where $h_{{KS}}(\widetilde{T}_{\varepsilon, \Delta \tau})$ is positive
because $\widetilde{T}_{\varepsilon, \Delta\tau}$ is an Anosov diffeomorphism. 

According to Kac \cite{kac1959probability} and Chrikov \cite{chirikov2001big},  relaxation does not always occur monotonically, even when the system demonstrates mixing properties. However, under certain conditions, {the} KS entropy can be consistent with the time derivative of the entropy by assuming appropriate conditions \cite{latora1999kolmogorov,zaslavsky1985chaos,falcioni2007initial}.

Subsequently, we examine scenarios in which $N$ is sufficiently large to define $I_{\delta_{N,\Delta\tau,\varepsilon}}$ as $\left(-\frac{1}{2}, \frac{1}{2}\right]$ and where condition (\ref{Condition}) is satisfied.

To calculate $h_{{KS}}(\widetilde{T}_{\varepsilon, \Delta \tau})$, the eigenvalue associated with the
unstable direction, expressed by $\gamma(t), t \in I_{\delta_{N,\Delta\tau,\varepsilon}}$ is required with
\begin{eqnarray}
\gamma(t) &=& \left\lbrace 
\begin{array}{cl}
\gamma_-(t),&\mbox{when}~\varepsilon > \frac{2}{\pi(\Delta \tau)^2},\\
\gamma_+(t),&\mbox{when}~\varepsilon < 0,
\end{array}
\right.\\
\gamma_{\pm}(t) &=& 1- \frac{\pi\varepsilon(\Delta \tau)^2}{\cos^2(\pi {t})}\pm 
\sqrt{\left(1-\frac{\pi\varepsilon(\Delta \tau)^2}{\cos^2(\pi {t})}\right)^2-1}, \label{Eigenvalue}
\end{eqnarray}
where $h_{{KS}}(\widetilde{T}_{\varepsilon, \Delta \tau})$ is equivalent to the positive Lyapunov exponent $\lambda(\varepsilon, \Delta \tau)$
{owing} to the Pesin identity.
To obtain an explicit formula of the Lyapunov exponent, 
(\ref{Eigenvalue}) was substituted into $\left\Vert D\widetilde{T}_{\varepsilon, \Delta \tau}|_{E^u}\right\Vert$.

(i) {In} the case of $\varepsilon> \frac{2}{\pi(\Delta\tau)^2}$, the Lyapunov exponent is expressed as follows:
	\begin{equation}
	\hspace*{-2.5cm}
	\begin{array}{lllll}
	&&&&\lambda(\varepsilon, \Delta\tau) =\displaystyle \int_{-\frac{1}{2}}^{\frac{1}{2}}\log\left|
	1- \frac{\pi\varepsilon(\Delta \tau)^2}{\cos^2(\pi {t})}- 
	\sqrt{\left(1-\frac{\pi\varepsilon(\Delta \tau)^2}{\cos^2(\pi {t})}\right)^2-1}\right|dt\\
	&=&&&\displaystyle \int_{-\frac{1}{2}}^{\frac{1}{2}}\log\left|
	\cos^2(\pi t)-\pi\varepsilon(\Delta\tau)^2-
	\sqrt{\left(\cos^2(\pi t)-\pi\varepsilon(\Delta\tau)^2\right)^2-\cos^4(\pi t)}
	\right| dt\\
	&&&-&
	\displaystyle \int_{-\frac{1}{2}}^{\frac{1}{2}}\log\left|
	\cos^2(\pi t)
	\right| dt\\
	&=&&&\displaystyle \int_{-\frac{1}{2}}^{\frac{1}{2}}\log\left|
	\pi\varepsilon(\Delta\tau)^2-\cos^2(\pi t)+
	\sqrt{\left(\cos^2(\pi t)-\pi\varepsilon(\Delta\tau)^2\right)^2-\cos^4(\pi t)}
	\right| dt\\
	&&&-&
	\displaystyle\int_{-\frac{1}{2}}^{\frac{1}{2}}\log\left|
	\cos^2(\pi t)
	\right| dt\\
	&=&&& \displaystyle \int_{-\frac{1}{2}}^{\frac{1}{2}}\log\left|
	\cos^2(\pi t)-\pi\varepsilon(\Delta\tau)^2
	\right| dt
	-
	\int_{-\frac{1}{2}}^{\frac{1}{2}}\log\left|
	\cos^2(\pi t)
	\right| dt\\
	&&&+&
	\displaystyle\int_{-\frac{1}{2}}^{\frac{1}{2}}\log\left(
	1+
	\sqrt{1-\left(1-\frac{\pi\varepsilon(\Delta\tau)^2}{\cos^2(\pi t)}\right)^{-2}}
	\right) dt\\
	&=&&&\displaystyle \log\left[
	2\pi\varepsilon(\Delta\tau)^2-1+
	2\sqrt{\pi\varepsilon(\Delta\tau)^2\left\lbrace\pi\varepsilon(\Delta\tau)^2-1 \right\rbrace }
	\right]\\
	&&&+&
	\displaystyle\int_{-\frac{1}{2}}^{\frac{1}{2}}\log\left(
	1+
	\sqrt{1-\left(1-\frac{\pi\varepsilon(\Delta\tau)^2}{\cos^2(\pi t)}\right)^{-2}}
	\right) dt
	\end{array}
	\label{Eq:Lyapunov positive}
	\end{equation}

(ii) {In} the case of $\varepsilon < 0$, the Lyapunov exponent is expressed as follows:
	\begin{equation}
	\hspace*{-2.5cm}
	\begin{array}{lllll}
	&&&&\lambda(\varepsilon, \Delta\tau) =\displaystyle \int_{-\frac{1}{2}}^{\frac{1}{2}}\log\left|
	1- \frac{\pi\varepsilon(\Delta \tau)^2}{\cos^2(\pi {t})}+ 
	\sqrt{\left(1-\frac{\pi\varepsilon(\Delta \tau)^2}{\cos^2(\pi {t})}\right)^2-1}\right|dt\\
	&=&&&\displaystyle \int_{-\frac{1}{2}}^{\frac{1}{2}}\log\left|
	\cos^2(\pi t)-\pi\varepsilon(\Delta\tau)^2+
	\sqrt{\left(\cos^2(\pi t)-\pi\varepsilon(\Delta\tau)^2\right)^2-\cos^4(\pi t)}
	\right| dt\\
	&&&-&
	\displaystyle\int_{-\frac{1}{2}}^{\frac{1}{2}}\log\left|
	\cos^2(\pi t)
	\right| dt\\
	&=&&& \displaystyle \int_{-\frac{1}{2}}^{\frac{1}{2}}\log\left|
	\cos^2(\pi t)-\pi\varepsilon(\Delta\tau)^2
	\right| dt
	-
	\int_{-\frac{1}{2}}^{\frac{1}{2}}\log\left|
	\cos^2(\pi t)
	\right| dt\\
	&&&+&
	\displaystyle\int_{-\frac{1}{2}}^{\frac{1}{2}}\log\left(
	1+
	\sqrt{1-\left(1-\frac{\pi\varepsilon(\Delta\tau)^2}{\cos^2(\pi t)}\right)^{-2}}
	\right) dt\\
	&=&&& \displaystyle \log\left[
	1-2\pi\varepsilon(\Delta\tau)^2+2\sqrt{\pi\varepsilon(\Delta\tau)^2\left\lbrace \pi\varepsilon(\Delta\tau)^2-1\right\rbrace }
	\right]\\
	&&&+&
	\displaystyle\int_{-\frac{1}{2}}^{\frac{1}{2}}\log\left(
	1+
	\sqrt{1-\left(1-\frac{\pi\varepsilon(\Delta\tau)^2}{\cos^2(\pi t)}\right)^{-2}}
	\right) dt
	\end{array}
	\label{Eq:Lyapunov negative}
	\end{equation}
From Equation (\ref{Eq:Lyapunov negative}), when $|\varepsilon|\ll 1$, the following equation is valid:
\begin{equation}
\begin{array}{lll}
&&\log\left[1-2\pi\varepsilon(\Delta\tau)^2+2\sqrt{\pi\varepsilon(\Delta\tau)^2\left\lbrace \pi\varepsilon(\Delta\tau)^2-1\right\rbrace }\right]\\
&=&O\left(
-2\pi\varepsilon(\Delta\tau)^2+2\sqrt{\pi\varepsilon(\Delta\tau)^2\left\lbrace \pi\varepsilon(\Delta\tau)^2-1\right\rbrace }
\right)\\
&=& O(|\varepsilon|^\frac{1}{2}). 
\end{array} \label{Eq: Lyapunov critical exponent 1}
\end{equation}
In terms of $\displaystyle \int_{-\frac{1}{2}}^{\frac{1}{2}}\log\left(
1+\sqrt{1-\left(1-\frac{\pi\varepsilon(\Delta\tau)^2}{\cos^2(\pi t)}\right)^{-2}}
\right) dt$, the following equation is valid:
\begin{equation}
\begin{array}{lll}
&&\log\left(
1+\sqrt{1-\left(1-\frac{\pi\varepsilon(\Delta\tau)^2}{\cos^2(\pi t)}\right)^{-2}}
\right),\\
&=& O\left(\sqrt{1-\left(1-\frac{\pi\varepsilon(\Delta\tau)^2}{\cos^2(\pi t)}\right)^{-2}}\right)\\
&=& O(|\varepsilon|^\frac{1}{2}). 
\end{array} \label{Eq: Lyapunov critical exponent 2}
\end{equation}
Therefore, from Equations (\ref{Eq: Lyapunov critical exponent 1}) and (\ref{Eq: Lyapunov critical exponent 2}),
when $0<-\varepsilon \ll 1$, then
\begin{equation}
\lambda(\varepsilon, \Delta\tau) = O(|\varepsilon|^\frac{1}{2}).
\end{equation}
Therefore, the critical exponent of the Lyapunov exponent is $\frac{1}{2}$ as $\varepsilon \to -0$. This critical exponent aligns with those found in intermittent chaos. \cite{pomeau1980intermittent,umeno2016exact,okubo2018universality,okubo2022universal}.

Given the complexities of expressions (\ref{Eq:Lyapunov positive}) and (\ref{Eq:Lyapunov negative}), we {made certain assumptions} to facilitate obtaining an explicit analytical expression.  

\noindent (Assumption): 
{The} equivalent value of (\ref{KS entropy}) can be obtained provided 
$\left\langle \cos^2(\pi t)\right\rangle = 1/2$ into $\cos^2(\pi t)$ can be substituted into  Equations (\ref{Eq:Lyapunov positive}) and (\ref{Eq:Lyapunov negative}). 

{Building upon} this assumption,
Equations (\ref{Eq:Lyapunov positive}) and (\ref{Eq:Lyapunov negative}) can be integrated as follows:
{\footnotesize
	\begin{equation}
	\hspace*{-2cm}
	\lambda(\varepsilon, \Delta \tau) \simeq  
	\left\lbrace
	\begin{array}{cl}
	\log\left[2\pi\varepsilon(\Delta \tau)^2-1+4\sqrt{\pi\varepsilon(\Delta \tau)^2\left\lbrace \pi\varepsilon(\Delta \tau)^2-1\right\rbrace}+
	\frac{4\pi\varepsilon(\Delta \tau)^2\left\lbrace \pi\varepsilon(\Delta \tau)^2-1\right\rbrace }{2\pi\varepsilon(\Delta \tau)^2-1}\right],& 
	\varepsilon> \frac{2}{\pi(\Delta \tau)^2},\\
	\log\left[1-2\pi\varepsilon(\Delta \tau)^2+4\sqrt{\pi\varepsilon(\Delta \tau)^2\left\lbrace \pi\varepsilon(\Delta \tau)^2-1\right\rbrace}+
	\frac{4\pi\varepsilon(\Delta \tau)^2\left\lbrace \pi\varepsilon(\Delta \tau)^2-1\right\rbrace }{1-2\pi\varepsilon(\Delta \tau)^2}\right], &
	\varepsilon<0.
	\end{array}
	\right.
	\end{equation}
}
These equations also represent the theoretical expectation of the Kolmogorov-Sinai entropy.
The comparison between the numerical result and analytical formula of the Lyapunov exponent is shown in Figure \ref{Fig: Lyapunov exponent}.
The numerical result is consistent with the analytical formula.
\begin{figure}[h!]
	\centering
	\includegraphics[width=.7\columnwidth]{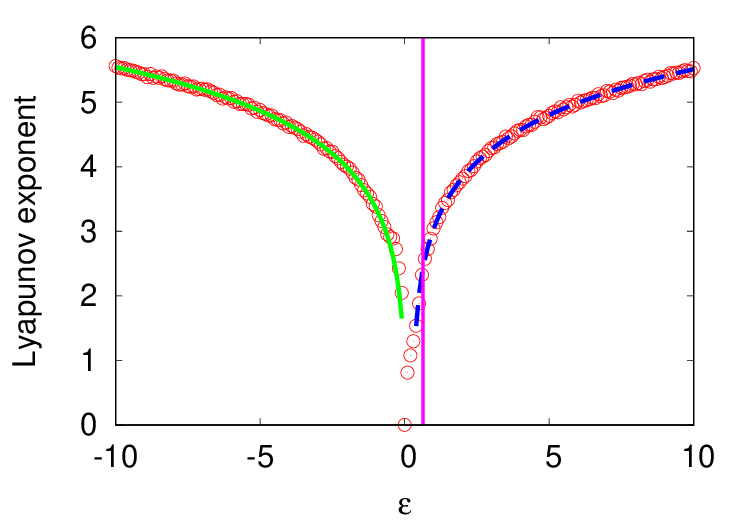}
	\caption{Numerical calculation of the Lyapunov exponent (red circle) and analytical formula (lines) when $\Delta \tau =1$.
		{The} solid line corresponds to $\lambda(\varepsilon, \Delta \tau)$ for $\varepsilon< 0$ and {the} broken line
		corresponds to $\varepsilon> \frac{2}{\pi}$. The iteration number is $10^4$, with the initial deviation vector
		and point specified as follows:
		$v_{s}=\frac{1}{\sqrt{2}}, v_{t}= \frac{1}{\sqrt{2}}$ and 
		$(s,t) = (\frac{\sqrt{2}-1}{2}, \frac{\sqrt{3}-1}{2})$. 
		{The} vertical line corresponds to $\varepsilon=\frac{2}{\pi}$.}
	\label{Fig: Lyapunov exponent}
\end{figure}

\section{Conclusion}
\label{sec: Conclusion}
This study demonstrated that the symplectic map $\widetilde{T}_{\varepsilon,\Delta\tau}$, derived from a certain Hamiltonian, functions as an Anosov diffeomorphism when the condition of linear instability was satisfied at the microscopic level. 
At the macroscopic level, we established the convergence of the initial density function, defined within the dynamical system excluding the volume-zero region, toward a uniform distribution. This uniform distribution corresponded to a unique equilibrium state, unique SRB measure, and physical measure. 
This observation suggests the coexistence of time-reversal symmetry at the microscopic level, derived from the Hamiltonian, and irreversibility at the macroscopic level.

The properties of $\widetilde{T}_{\varepsilon,\Delta\tau}$ derived from this study indicate that for $\varepsilon >0$, $\widetilde{T}_{\varepsilon,\Delta\tau}$ exhibited  the Anosov property
as the time step $\Delta\tau$ increased. Further, we demonstrated that it possessed the Anosov property for arbitrary time steps in the region where $\varepsilon$ is negative, implying that $\widetilde{T}_{\varepsilon,\Delta\tau}$ remained an Anosov diffeomorphism even as the time step $\Delta\tau$ approached infinitesimally small values, thereby validating the Anosov property in the continuous system.

The question arises: what implications does this result have on the mixing (irreversibility) of the original mapping $T_{\varepsilon, \Delta \tau}$ with time-reversal symmetry? 
Two possibilities emerge. First, both $T_{\varepsilon,\Delta\tau}$ and $\widehat{T}_{\varepsilon,\Delta\tau}$ exhibit a mixing property, thereby implying that this property can be inferred from $\widetilde{T}_{\varepsilon,\Delta\tau}$. Second, both 
$T_{\varepsilon,\Delta\tau}$ and $\widehat{T}_{\varepsilon,\Delta\tau}$ lack a mixing property, and the mixing property of $\widetilde{T}_{\varepsilon,\Delta\tau}$ results from the compactification of its phase space (folding). The resolution of this theoretical question remains a subject for future investigation.

We propose two potential research directions:

First, we suggest investigating diffusive phenomena within the $T_{\varepsilon,\Delta\tau}, \widehat{T}_{\varepsilon,\Delta\tau}$ mapping system. 
The leapfrog method, known for its symplectic numerical integration method that preserves time-reversal symmetry, ensures that $T_{\varepsilon,\Delta\tau}, \widehat{T}_{\varepsilon,\Delta\tau}$ mappings also preserve time-reversal symmetry with respect to momentum inversion. 
Regarding the time-reversal symmetry in the $\widetilde{T}_{\varepsilon,\Delta\tau}$ mapping in this study, 
we utilized a mathematically extended definition of physical time-reversal symmetry, focusing specifically on momentum, as we believe this symmetry regarding momentum inversion is crucial owing to the presence of physical observables. 
While this study did not demonstrate mixing (macroscopic irreversibility) in $T_{\varepsilon,\Delta\tau}, \widehat{T}_{\varepsilon,\Delta\tau}$, 
future research is expected to delve into mixing and relaxation phenomena in systems where time-reversal symmetry regarding momentum reversal is present. 
Particularly, within the $T_{\varepsilon,\Delta\tau}, \widehat{T}_{\varepsilon,\Delta\tau}$ mappings, relaxation of the initial density function to a uniform distribution on $\mathbb{R}$ is anticipated through the sum of random variables possessing the mixing property and adhering to the Cauchy distribution. 

Second, we propose studying a model derived from a Hamiltonian with numerous degrees of freedom proven to exhibit the mixing property. This research aims to investigate the statistical properties 
of systems that satisfy the assumptions of Gallavotti's chaos hypothesis. 
A multi-particle system can be constructed by preparing various ensembles for our $\widetilde{T}_{\varepsilon,\Delta\tau}$ mapping. 
However, this multi-particle system lacks interactions. In a multi-particle system with interactions, phenomena not observed in a system devoid of interactions may become apparent.

\section*{Acknowledgment}
The authors express their gratitude to Prof. Masayuki Asaoka and Dr. Yoshiyuki Y. Yamaguchi for their invaluable advice and informative discussions. Ken-ichi Okubo acknowledges the support received from the Grant-in-Aid for JSPS Research Fellow Grant Number JP17J07694 and Grant-in-Aid for Early-Career Scientists Number 23K16963.


 \bibliographystyle{elsarticle-num} 
 \bibliography{cas-refs}

\begin{thebibliography}{10}
\expandafter\ifx\csname url\endcsname\relax
  \def\url#1{\texttt{#1}}\fi
\expandafter\ifx\csname urlprefix\endcsname\relax\def\urlprefix{URL }\fi
\expandafter\ifx\csname href\endcsname\relax
  \def\href#1#2{#2} \def\path#1{#1}\fi

\bibitem{cercignaniboltzmann}
C.~Cercignani, The {B}oltzmann {E}quation and its {A}pplication, Springer, New
  York, 1988.

\bibitem{gallavotti1999statistical}
G.~Gallavotti, Statistical mechanics: {A} short treatise, Springer Science \&
  Business Media, 1999.

\bibitem{dorfman1999introduction}
J.~R. Dorfman, An introduction to chaos in nonequilibrium statistical
  mechanics, no.~14, Cambridge University Press, 1999.

\bibitem{keller1998equilibrium}
G.~Keller, Equilibrium states in ergodic theory, Vol.~42, Cambridge university
  press, 1998.

\bibitem{ruelle1999smooth}
D.~Ruelle, Smooth dynamics and new theoretical ideas in nonequilibrium
  statistical mechanics, Journal of Statistical Physics 95 (1999) 393--468.

\bibitem{zaslavsky1999chaotic}
G.~M. Zaslavsky, Chaotic dynamics and the origin of statistical laws, Physics
  Today 52~(8) (1999) 39--45.

\bibitem{gallavotti2014nonequilibrium}
G.~Gallavotti, Nonequilibrium and {I}rreversibility, Springer, 2014.

\bibitem{sinai1970dynamical}
Y.~G. Sinai, Dynamical systems with elastic reflections, Russian Mathematical
  Surveys 25~(2) (1970) 137.

\bibitem{bunimovich1979ergodic}
L.~A. Bunimovich, On the ergodic properties of nowhere dispersing billiards,
  Communications in Mathematical Physics 65 (1979) 295--312.

\bibitem{wojtkowski1986principles}
M.~Wojtkowski, Principles for the design of billiards with nonvanishing
  {L}yapunov exponents, Communications in Mathematical Physics 105 (1986)
  391--414.

\bibitem{arnold1968ergodic}
V.~I. Arnold, A.~Avez, Ergodic problems of classical mechanics, Benjamin, New
  York, 1968.

\bibitem{tasaki1998analytical}
S.~Tasaki, T.~Gilbert, J.~R. Dorfman, An analytical construction of the {SRB}
  measures for {B}aker-type maps, Chaos: An Interdisciplinary Journal of
  Nonlinear Science 8~(2) (1998) 424--443.

\bibitem{tasaki1995fick}
S.~Tasaki, P.~Gaspard, Fick's law and fractality of nonequilibrium stationary
  states in a reversible multibaker map, Journal of statistical physics 81
  (1995) 935--987.

\bibitem{katok1995introduction}
A.~Katok, B.~Hasselblatt, Introduction to the modern theory of dynamical
  systems, no.~54, Cambridge university press, 1995.

\bibitem{bowen1975equilibrium}
R.~Bowen, Equilibrium states and the ergodic theory of {A}nosov
  diffeomorphisms, Lecture notes in mathematics 470 (1975) 1--17.

\bibitem{pesin1976families}
J.~B. Pesin, Families of invariant manifolds corresponding to nonzero
  characteristic exponents, Izvestiya: Mathematics 10~(6) (1976) 1261--1305.

\bibitem{katok2006invariant}
A.~Katok, J.-M. Strelcyn, Invariant manifolds, entropy and billiards. {S}mooth
  maps with singularities, Vol. 1222, Springer, 2006.

\bibitem{pugh1989ergodic}
C.~Pugh, M.~Shub, Ergodic attractors, Transactions of the American Mathematical
  Society 312~(1) (1989) 1--54.

\bibitem{young2002srb}
L.-S. Young, What are {SRB} measures, and which dynamical systems have them?,
  Journal of statistical physics 108 (2002) 733--754.

\bibitem{ermann2012arnold}
L.~Ermann, D.~L. Shepelyansky, The arnold cat map, the {U}lam method and time
  reversal, Physica D: Nonlinear Phenomena 241~(5) (2012) 514--518.

\bibitem{barash2006periodic}
L.~Barash, L.~Shchur, Periodic orbits of the ensemble of {S}inai-{A}rnold cat
  maps and pseudorandom number generation, Physical Review E 73~(3) (2006)
  036701.

\bibitem{roberts1992chaos}
J.~A. Roberts, G.~Quispel, Chaos and time-reversal symmetry. {O}rder and chaos
  in reversible dynamical systems, Physics Reports 216~(2-3) (1992) 63--177.

\bibitem{franks1969anosov}
J.~Franks, Anosov diffeomorphisms on tori, Transactions of the American
  Mathematical Society 145 (1969) 117--124.

\bibitem{zaslavsky1985chaos}
G.~M. Zaslavsky, Chaos in dynamic systems, Taylor \& Francis, 1985.

\bibitem{brin2002introduction}
M.~Brin, G.~Stuck, Introduction to dynamical systems, Cambridge university
  press, 2002.

\bibitem{pesin1977characteristic}
Y.~B. Pesin, Characteristic {L}yapunov exponents and smooth ergodic theory,
  Russian Mathematical Surveys 32~(4) (1977) 55.

\bibitem{barreira2007nonuniform}
L.~Barreira, Y.~B. Pesin, Nonuniform hyperbolicity: {D}ynamics of systems with
  nonzero {L}yapunov exponents, Vol. 115, Cambridge University Press Cambridge,
  2007.

\bibitem{ledrappier1985metric}
F.~Ledrappier, L.-S. Young, The metric entropy of diffeomorphisms: {P}art {I}:
  {C}haracterization of measures satisfying {P}esin's entropy formula, Annals
  of Mathematics (1985) 509--539.

\bibitem{rodriguez2011uniqueness}
F.~Rodriguez~Hertz, M.~Rodriguez~Hertz, A.~Tahzibi, R.~Ures, Uniqueness of
  {SRB} measures for transitive diffeomorphisms on surfaces, Communications in
  mathematical physics 306 (2011) 35--49.

\bibitem{gaspard1990transport}
P.~Gaspard, G.~Nicolis, Transport properties, {L}yapunov exponents, and entropy
  per unit time, Physical review letters 65~(14) (1990) 1693.

\bibitem{kac1959probability}
M.~Kac, Probability and related topics in physical sciences, Vol.~1, American
  Mathematical Soc., 1959.

\bibitem{chirikov2001big}
B.~Chirikov, Big entropy fluctuations in the nonequilibrium steady state: {A}
  simple model with the gauss heat bath, Journal of Experimental and
  Theoretical Physics 92 (2001) 179--193.

\bibitem{latora1999kolmogorov}
V.~Latora, M.~Baranger, Kolmogorov-{S}inai entropy rate versus physical
  entropy, Physical Review Letters 82~(3) (1999) 520.

\bibitem{falcioni2007initial}
M.~Falcioni, L.~Palatella, S.~Pigolotti, L.~Rondoni, A.~Vulpiani, Initial
  growth of {B}oltzmann entropy and chaos in a large assembly of weakly
  interacting systems, Physica A: Statistical Mechanics and its Applications
  385~(1) (2007) 170--184.

\bibitem{pomeau1980intermittent}
Y.~Pomeau, P.~Manneville, Intermittent transition to turbulence in dissipative
  dynamical systems, Communications in Mathematical Physics 74 (1980) 189--197.

\bibitem{umeno2016exact}
K.~Umeno, K.-i. Okubo, Exact {L}yapunov exponents of the generalized {B}oole
  transformations, Progress of Theoretical and Experimental Physics 2016~(2)
  (2016) 021A01.

\bibitem{okubo2018universality}
K.-i. Okubo, K.~Umeno, Universality of the route to chaos: {E}xact analysis,
  Progress of Theoretical and Experimental Physics 2018~(10) (2018) 103A01.

\bibitem{okubo2022universal}
K.-i. Okubo, K.~Umeno, Universal critical behavior of transition to chaos:
  {I}ntermittency route, Progress of Theoretical and Experimental Physics
  2022~(7) (2022) 073A01.

\end{thebibliography}





\end{document}